
\documentclass[authoryear,1p,10.8pt]{elsarticle}
\usepackage{algorithmicx}
\usepackage{algorithm}

\usepackage{algpseudocode}


\usepackage{graphicx} 

\usepackage{amsmath, amssymb, amsthm,  bbm} 

\newtheorem{theorem}{Theorem}[section]
\newtheorem{lemma}{Lemma}[section]
\newtheorem{corollary}{Corollary}[section]
\newtheorem{remark}{Remark}[section]
\newtheorem{definition}{Definition}[section]

\newtheorem{example}{Example}[section]
\newtheorem{problem}{Problem}[section]
\usepackage{verbatim}
\usepackage[bb=dsserif]{mathalpha}
\usepackage{bm}

\usepackage{tikz}
\usepackage{color,hyperref}
\definecolor{darkblue}{rgb}{0.0,0.0,0.5}
\hypersetup{colorlinks,breaklinks,linkcolor=darkblue,urlcolor=darkblue,anchorcolor=darkblue,citecolor=darkblue}

\usepackage{makecell}
\usepackage{tikz}
\usetikzlibrary{matrix,positioning}
\tikzset{bullet/.style={circle,fill,inner sep=2pt}}
\usepackage{siunitx}
\sisetup{output-exponent-marker=\ensuremath{\mathrm{e}}}
\usepackage{subcaption}

\usepackage{natbib}
\usepackage{booktabs}
\usepackage{pgf-pie}
\usepackage{pgfplots}
\pgfplotsset{compat=newest}

\usetikzlibrary{decorations.markings}

\tikzset{
tangent/.style={ 
	decoration={
		markings,
		mark=
		at position #1
		with
		{
			\coordinate (tangent point-\pgfkeysvalueof{/pgf/decoration/mark info/sequence number}) at (0pt,0pt);
			\coordinate (tangent unit vector-\pgfkeysvalueof{/pgf/decoration/mark info/sequence number}) at (1,0pt);
			\coordinate (tangent orthogonal unit vector-\pgfkeysvalueof{/pgf/decoration/mark info/sequence number}) at (0pt,1);
			\draw[blue] (0pt,0pt) coordinate (Tang) -- (20pt,0pt);
		}
	},
	postaction=decorate
},
use tangent/.style={
	shift=(tangent point-#1),
	x=(tangent unit vector-#1),
	y=(tangent orthogonal unit vector-#1)
},
use tangent/.default=1
}
\makeatletter

\DeclareRobustCommand{\rvdots}{%
\vbox{
	\baselineskip4\p@\lineskiplimit\z@
	\kern-\p@
	\hbox{.}\hbox{.}\hbox{.}
}}

\usepackage[appendix = inline]{apxproof}
\newtheoremrep{theorem}{Theorem}[section]
\newtheoremrep{lemma}{Lemma}[section]

\let\OLDthebibliography\thebibliography
\renewcommand\thebibliography[1]{
\OLDthebibliography{#1}
\setlength{\parskip}{0 pt}
\setlength{\itemsep}{4pt plus 0.3ex}
}

\journal{European Journal of Operation Research}

\begin{document}

\begin{frontmatter}
	
	
	
	\title{On Accelerating Large-Scale\\ Robust Portfolio Optimization}
	
	
	\author[inst1]{Chung-Han Hsieh} 
	
	\affiliation[inst1]{organization={Department of Quantitative Finance, National Tsing Hua University},
		addressline={\\ No. 101, Section~2, Kuang-Fu Road}, 
		city={Hsinchu},
		postcode={30013}, 
		country={Taiwan}}
	
 \author[inst1]{Jie-Ling Lu}
	
	
	\begin{abstract}
Solving large-scale robust portfolio optimization problems is challenging due to the high computational demands associated with an increasing number of assets, the amount of data considered, and market uncertainty. 
To address this issue, we propose an extended supporting hyperplane approximation approach for efficiently solving a class of distributionally robust portfolio problems for a general class of \emph{additively separable utility} functions and polyhedral ambiguity distribution set, applied to a large-scale set of assets. 
Our technique is validated using a large-scale portfolio of the S\&P 500 index constituents, demonstrating robust out-of-sample trading performance.  
More importantly, our empirical studies show that this approach significantly reduces computational time compared to traditional concave Expected Log-Growth (ELG) optimization, with running times decreasing from several thousand seconds to just a few. This method provides a scalable and practical solution to large-scale robust portfolio optimization, addressing both theoretical and practical~challenges.
	\end{abstract}
	
	
%
%
%
%
%
%
	
	\begin{keyword}
		Portfolio Optimization\sep Distributionally Robust Optimization\sep Robust Linear Programming\sep Approximation Theory.   
	\end{keyword}
	
\end{frontmatter}


\section{Introduction } \label{section: Introduction}

Solving large-scale robust portfolio optimization problems presents significant computational challenges due to the increasing number of assets, the amount of data considered, and market uncertainty. 
Traditional approaches, such as the mean-variance (MV) models proposed by  \cite{markowitz1952}, are single-period in nature, and assume full availability of the return distribution.
Standard multi-period approaches, such as dynamic programming and stochastic control frameworks, often face scalability issues and require strong assumptions about return distributions and investor preferences.

In response to these challenges, this paper introduces a novel method that significantly enhances the computational efficiency of solving a class of robust portfolio optimization problems for a general class of additively separatable utility functions, polyhedral ambiguity sets, and proportional transaction costs.

We propose an extended supporting hyperplane approximation incorporating turnover transaction costs, significantly enhancing the computational efficiency of solving robust portfolio optimization problems for a general class of \emph{additively separable} utility functions and polyhedral ambiguity distribution sets.
Specifically, our key contributions are as follows:

\emph{Generalized Supporting Hyperplane Approximation:}
We develop an extended supporting hyperplane approximation method tailored to robust portfolio optimization. This method efficiently handles additively separable utility functions and polyhedral uncertainty sets while incorporating proportional transaction costs. Unlike existing methods, our approach significantly reduces the computational burden and scales well with the number of assets.

\emph{Approximation Error Analysis:}
We provide a comprehensive error analysis of our approximation method. The total approximation error is decomposed into errors arising from returns and transaction costs, enhancing the tractability and reliability of our approach. This analysis is crucial for understanding the trade-offs involved and for ensuring robust performance in practical applications.

\emph{Empirical Validation:}
Our empirical studies validate the robustness and efficiency of the proposed method using a large-scale portfolio of S\&P 500 index constituents. We demonstrate a significant reduction in computational time from several thousand seconds to just a few seconds, compared to the traditional expected logarithmic growth (ELG) optimal portfolio framework. Our results show robust out-of-sample trading performance, highlighting the practical applicability of our approach.

\subsection{Background and Related Work}
The foundation of modern portfolio optimization was established with the well-known mean-variance~(MV) model proposed by \cite{markowitz1952}, which describes the trade-off between risks and returns. 
Researchers have since explored various extensions, including different risk measures, such as value at risk (VaR), e.g., \cite{duffie1997overview, jorion2007value} and conditional value at risk~(CVaR), e.g., \cite{rockafellar2000optimization} as well as other robust statistical approaches to mitigate the sensitivity to parameter inputs, e.g., \cite{black1992global, feng2016signal}. 
Despite its significance, the MV model is mainly single-period in nature and is sensitive to parameter inputs, making it error-prone, as discussed in \cite{michaud1989markowitz} and \cite{fabozzi2007robust}. A comprehensive review of the topic can be found in \cite{steinbach2001markowitz}.

In contrast to single-period models, the \emph{Kelly criterion} proposed by \cite{kelly1956new} maximizes the expected logarithm growth (ELG) of wealth in a repeated betting game. 
The ELG model has desirable theoretical properties, such as  \emph{myopic} optimization if returns are known to be independent and identically distributed, see \cite{Cover_Thomas_2012, MacLean_Thorp_Ziemba_2011book}. 
Extensions of the ELG model include log-mean-variance criteria in portfolio choice problems, see  \cite{luenberger1993preference} and asymptotic optimality in a rebalancing frequency-dependent setting, see \cite{hsieh2023asymptotic}.
It is known that this Kelly-type investment can be cast into a general expected utility theory framework with logarithmic utility; see \cite{luenberger2013investment}.
However, in practice, the actual return distribution is unavailable to the investor.  Hence, constructing an empirical distribution solely from historical returns may lead to over-fitting, resulting in poor out-of-sample trading performance.

\subsubsection{Distributional Robust Portfolio Optimization}
Given the true return distributions are often unknown, leading to \emph{ambiguity} for investors, a broad literature focuses on distributionally robust optimization (DRO) approaches. 
In DRO problems, an ambiguity set defines a family of return distributions consistent with some known~information. 
Examples include maximizing the growth rate of the worst-case VaR, see \cite{rujeerapaiboon2016robust}, extensions to autocorrelated return distributions \cite{choi2016multi}, and ambiguity regions involving means and covariances of return vectors 
\cite{delage2010distributionally}. General results connecting DRO and optimal transport theory are studied in \cite{blanchet2019quantifying} and the data-driven approach of the DRO problem is studied in \cite{mohajerin2018data}.

Furthermore, \cite{blanchet2022distributionally} examined the distributionally robust version of the MV portfolio selection problem with Wasserstein distance. 
Other studies, such as \cite{sun2018distributional}, derive distributionally robust Kelly problems from various ambiguity sets, such as polyhedral, ellipsoidal, and Wasserstein~sets; see also the comprehensive text by~\cite{shapiro2021lectures}. 
Recently, \cite{hsieh2024solving} addressed a DRO version of the Kelly problem with a polyhedral ambiguity set using the supporting hyperplane approximation approach, including practical constraints, and achieving significant computational improvement for a mid-sized portfolio selection problem. Additionally, \cite{li2023wasserstein} solved a Wasserstein-Kelly problem by leveraging a log-return transformation and convex conjugate approach. 

While many prior studies focus on log-utility, such as the Kelly criterion-based literature, without transaction costs, our approach accounts for a general class of additively separable utilities and incorporates market friction arising from turnover transaction costs. This extension greatly broadens the applicability and practical relevance of the method.
Reviews of the recent DRO approach are available in \cite{rahimian2019distributionally}. For specific applications, we refer to \cite{ghahtarani2022robust} the references~therein.

\subsubsection{Large-Scale Considerations}
Beyond distributional robustness, another critical aspect of portfolio optimization is the computational complexity associated with large-scale portfolios.
Early solutions include sparsifying the covariance matrix of asset returns to reduce nonzero elements \cite{perold1984large} and using static mean-absolute deviation portfolio optimization model by \cite{konno1991mean}, and the compact mean-variance-skewness model by \cite{ryoo2007compact}.
Additionally,  \cite{takehara1993interior} proposed the interior point algorithm, demonstrating that the increase in CPU time is almost linear to the problem size~$n$. 
\cite{potaptchik2008large} adopted the mean-variance model with nonlinear transaction costs.

Recent developments focus on algorithmic and stochastic programming approaches, such as the ``value function gradient learning'' algorithm for large-scale multistage stochastic convex programs \cite{lee2023value} and the sample path approach to multistage stochastic linear optimization \cite{bertsimas2023data}. 
However, further exploration of their computational efficiencies and practical scalability in empirical justifications is needed.

Our method addresses these challenges by providing a scalable and robust solution for large-scale portfolio management. The extended supporting hyperplane approximation, combined with turnover transaction costs, enhances computational efficiency and robustness, making it a practical and effective tool for portfolio optimization.

\subsection{Notations} \label{subsection: Notations}

In this paper, we use the following notations: $\mathbb{R}^n$ denotes the $n$-dimensional Euclidean space. The~$\ell_p$-norm of a vector $z \in \mathbb{R}^n$ is denoted by $\|z\|_p = \left(\sum_{i=1}^n |z_i|^p\right)^{1/p}$. A \emph{concave function} $f: \mathbb{R}^n \to \mathbb{R}$ satisfies~$f(\lambda x + (1-\lambda) y) \geq \lambda f(x) + (1-\lambda) f(y)$ for any $x, y \in \mathbb{R}^n$ and $\lambda \in [0,1]$. A function $f$ is \emph{convex} if $-f$ is concave, and it is \emph{strictly concave} if the inequality is strict for $x \neq y$ and $\lambda \in (0,1)$. 
Let $g(a, b)$ be real-valued function defined on $\mathbb{R}^n \times \mathbb{R}^n$, then $g(a, b)$ is \emph{jointly convex} if for all~$ x_1, x_2, y_1, y_2 \in \mathbb{R}^n$ and $\lambda \in [0,1]$,
$
g( \lambda x_1 + (1 - \lambda)x_2, \lambda y_1 + (1 - \lambda)y_2) \leq \lambda g(x_1, y_1) + (1 - \lambda)g(x_2, y_2).
$
Moreover, we say that $g(x, y)$ is \emph{jointly concave} if $-g(x, y)$ is jointly convex.
All random objects are defined in a probability space $(\Omega, \mathcal{F} , \mathbb{P})$ with $\Omega$ being the sample space,~$\mathcal{F}$ being the information set, and $\mathbb{R}$ being the probability measure. 
Notation $\mathbb{E}_p[\cdot]$ represents the expectation operator with respect to a probability distribution $p$. The probability simplex $S_m$ is defined as $S_m := \{p \in \mathbb{R}^m_+ : p^\top \mathbf{1} = 1\}$, where $\mathbb{R}^m_+$ is the set of non-negative $m$-dimensional vectors and $\mathbf{1} \in \mathbb{R}^m$ is a vector of ones. The leverage constant $L \geq 1$ defines the upper bound on the sum of the absolute values of portfolio weights. The turnover rate $TR(t)$ quantifies the changes in portfolio weights $K(\cdot)$ between periods, defined as $TR(t) := | K(t) - K( t-1 ) |^\top \mathbf{1}$. The auxiliary function $f_t(x, c) := U_t( (1 + x)(1 - c) )$ represents a utility function adjusted for returns and transaction costs, where $U_t$ is continuously differentiable, strictly monotonic, and strictly concave.

\section{Preliminaries} \label{section: Preliminaries}

This section provides some preliminaries useful for the problem formulation.
Specifically, consider a financial market with $\mathfrak{N} > 1$ assets. We form a portfolio of $1 \leq n \leq \mathfrak{N}$ assets, to be rebalanced with weights $K(t) \in \mathbb{R}^n$ during rebalancing periods~$t = 1, 2, \dots, T$. For $i = 1, 2, \dots, n$, let $S_{i}(t) > 0$ be the price of Asset~$i$ at period $t$.  The associated per-period return is given by:
$$
X_{i}(t) = \frac{S_{i}(t)-S_{i}(t-1)}{S_{i}(t-1)}
$$
with $\min\{X_{i}(t) : t= 1, 2, \dots, T,\ i = 1, 2, \dots, n\} > -1$. 
The price vector is defined as $S(t) := [S_{1}(t) \; S_{2}(t) \; \cdots  S_{n}(t)]^\top$, and the return vector is defined as~$X(t) := [X_{1}(t) \; X_{2}(t) \; \cdots X_{n}(t)]^\top $.
These returns are treated as a random sample following an unknown distribution.
Henceforth, we assume that the returns are independent and identically distributed (i.i.d.) over time $t$, but with some unknown joint distribution $p$ with $m$ supports. That is,~$\mathbb{P}(X(t) = x^j) = p^j$ for $j = 1, \dots, m$. Henceforth, we take $x_{i, \min} := \min_j x_i^j$ and $x_{i, \max} := \max_j x_i^j$ for $i = 1,\dots, n.$


\begin{remark}[Assumption of Return Model]\rm
Consistent with \cite{mohajerin2018data, li2023wasserstein,hsieh2024solving},
we note that the random returns $X_i$ are treated as independent and identically distributed (i.i.d.) samples from the true but unknown return distribution, which is assumed to lie within an ambiguous distribution set.
This assumption enables the application of robust statistical methods for analysis. Empirical evidence suggests that financial returns over relatively short intervals can be approximated by i.i.d. samples, simplifying our mathematical framework without sacrificing accuracy. 
However, the i.i.d. assumption may not fully capture temporal dependencies and dynamic correlations. To address this, we use an ambiguous distribution set, enhancing robustness by considering a range of potential distributions. This approach mitigates overfitting and safeguards against extreme market conditions, making the model more realistic and applicable to diverse market behaviors.
\end{remark}

\subsection{Account Value Dynamics with Turnover Transaction Costs}

Let $V(t)$ be the account value at period $t$, where the initial account value is $V(0) > 0$. With $K_{i}(t)$ being the weight of the $i$th asset invested at period $t$ for all $i$, we take $
K(t) := [K_{1}(t)\; K_{2}(t)\; \cdots \; K_{n}(t) ]^\top.
$
For transaction costs, we consider a fixed rate charged on the turnover for the $i$th asset. Hence, the \emph{turnover transaction cost} $\mathfrak{C}(t)$ at period $t$ is given by
$$
\mathfrak{C}(t) := |K(t) - K(t-1)|^\top V(t-1)C(t)
$$
where $C(t) := [c_1(t) \; c_2(t)\; \cdots \; c_n(t)]^\top$ is the \emph{cost} vector with $c_i(t) \in [0,1)$ for $i \in \{1, 2, \dots, n\}$ and~$|K(t) - K(t-1)|$ means the componentwise absolute value with the $i$th element being~$|K_i(t) - K_i(t-1)|$.
Hence, the \emph{cost-adjusted account value} at period $t$~satisfies the following stochastic recursive equation: For~$t = 1, 2, \dots, T$,
\begin{align*}
V(t) 
&= ( 1 + K(t)^\top   X(t)) \left( V(t-1) - \mathfrak{C}(t) \right) \\
&= (1 + K(t)^\top   X(t))(1 - |K(t) - K(t-1)|^\top C(t) ) V(t-1)
\end{align*}
with initial account value $V(0) > 0.$

\begin{remark}[Zero Cost Case] \rm
It is readily verified that when there are no transaction costs, i.e.,~$C(t) \equiv \textbf{0} \in \mathbb{R}^n$, the account value dynamics above reduce to
$    
V(t) = (1 + K(t)^\top X(t)) V(t-1).
$
which is consistent with the existing models; see, e.g.,~\cite{li2018transaction, hsieh2023asymptotic}.
If the regulator specifies constant rates, we set $C(t) \equiv C$ with $c_i(t) := c_i$ for all $t$.
\end{remark}

\subsection{Practical Trading Constraints} \label{subsection: Trading Constraints}
This section discusses various trading constraints that will be imposed on the model.

\subsubsection{Leverage, Short Selling, Turnover, and Holding Constraints.}
To impose the leverage constraint, we take $L \geq 1$ as the \emph{leverage constant}. Then, the leverage constraint is given by $\sum_{i=1}^n |K_i(t) V(t)| \leq LV(t)$, which implies 
\begin{align} \label{eq: leverage constraint}
\sum_{i=1}^n |K_i(t)| \leq L.
\end{align}
When $L = 1$, it corresponds to being \emph{cash-financed}; if $L > 1$, it corresponds to \emph{leverage}. 
On the other hand, by allowing \emph{shorting}, the constraint is written as
$
\sum_{i=1}^n |K_i^{+}(t) + K_i^{-}(t)| \leq L,
$
where $K_i^{+}(t) > 0$ and $K_i^{-}(t) < 0$ represent the proportion of longing and shorting in the $i$th asset, respectively. To go long, we require the sum~$K_i^{+}(t) + K_i^{-}(t) > 0$. Similarly, to go short, the sum must satisfy $K_i^{+}(t) + K_i^{-}(t) < 0$.

Additionally, the portfolio turnover may result in large transaction costs, making the rebalancing inefficient. To this end, one may restrict the amount of turnover allowed as a constraint. Typically, we restrict $|K_i(t+1) - K_i(t)| \leq U_i$ or on the whole portfolio $\|K(t+1) - K(t)\|_1 \leq U$ for some constant $U$, where $\|z\|_1$ denotes the $\ell_1$-norm for $z \in \mathbb{R}^n$ with $\|z\|_1 := \sum_{i=1}^n |z_i|.$
Lastly, for the sake of risk management, concentrated holdings can be avoided by constraining the upper bound of the portfolio weight, i.e., for all $i = 1, \dots, n$, and time $t$,
\begin{align} \label{eq: diversification constraint}
|K_i(t)| \leq D_i
\end{align}
for some constant $D_i > 0$. If $D_i := \frac{L}{n}$ for all $i$, this is referred to as the \emph{diversified holding constraint}.  Later in Section~\ref{section: Empirical Studies}, we shall see that the diversified holding constraint can be somewhat replaced by imposing the ambiguity consideration in the return distribution.

\subsubsection{Survival Constraints.}

In practice, when considering investment leverage $L \geq 1$, a negative account value $V(t) < 0$ must be forbidden for all $t$ with probability one. This ensures the account remains \emph{survivable} and there is \emph{no bankruptcy}. The following lemma states sufficient conditions for ensuring a trade is survivable when the turnover cost is involved.


\begin{lemma}[Survivability Condition] \label{lemma: Weak Survivability Condition}
The probability  $
\mathbb{P} (V(t) \geq 0) = 1
$ for all~$t \geq 1$ 
if the following two conditions hold:
\begin{align} 
	\begin{cases}
		&	\sum_{i=1}^{n} K_i^{+}(t) |\min\{0, x_{i, \min}\}|- \sum_{i=1}^{n} K_i^{-}(t) \max\{0, x_{i, \max}\} \leq 1; \label{eq: survival constraint 1} \\
	&	| K(t) - K(t-1)|^\top C(t) \leq 1. 	  
	\end{cases}
\end{align}
\end{lemma}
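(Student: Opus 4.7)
The plan is to argue by induction on $t$, exploiting the multiplicative structure
\[
V(t) = \underbrace{\bigl(1 + K(t)^\top X(t)\bigr)}_{=: A(t)} \cdot \underbrace{\bigl(1 - |K(t)-K(t-1)|^\top C(t)\bigr)}_{=: B(t)} \cdot V(t-1),
\]
with $V(0) > 0$. Provided I can show that both factors $A(t)$ and $B(t)$ are nonnegative almost surely, the induction closes immediately: if $V(t-1) \geq 0$ a.s., then $V(t) \geq 0$ a.s. So the whole claim reduces to two separate nonnegativity statements, each tied to one of the two assumed conditions.

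The factor $B(t)$ is the easy half: the second condition $|K(t) - K(t-1)|^\top C(t) \leq 1$ deterministically forces $B(t) \geq 0$, with no randomness involved.

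For the factor $A(t)$, I would decompose each component as $K_i(t) = K_i^+(t) + K_i^-(t)$ with $K_i^+(t) \geq 0$ and $K_i^-(t) \leq 0$, and then bound $K(t)^\top X(t)$ from below over the support of $X(t)$. Since $X_i(t) \in [x_{i,\min}, x_{i,\max}]$ almost surely, the long part satisfies $K_i^+(t) X_i(t) \geq K_i^+(t)\, x_{i,\min}$ and the short part satisfies $K_i^-(t) X_i(t) \geq K_i^-(t)\, x_{i,\max}$ (the inequality flips because $K_i^-(t) \leq 0$). Summing,
\[
K(t)^\top X(t) \;\geq\; \sum_{i=1}^n K_i^+(t)\, x_{i,\min} + \sum_{i=1}^n K_i^-(t)\, x_{i,\max}.
\]
Then I would show that, regardless of the signs of $x_{i,\min}$ and $x_{i,\max}$, one has
\[
-\sum_{i=1}^n K_i^+(t)\, x_{i,\min} - \sum_{i=1}^n K_i^-(t)\, x_{i,\max}
\;\leq\; \sum_{i=1}^n K_i^+(t)\,|\min\{0, x_{i,\min}\}| - \sum_{i=1}^n K_i^-(t)\,\max\{0, x_{i,\max}\},
\]
checking the two sign cases termwise (when $x_{i,\min} \geq 0$ the long term drops to $0$, which is still an upper bound since $K_i^+(t) \geq 0$; the short side is symmetric). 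Hypothesis \eqref{eq: survival constraint 1} then bounds the right-hand side by $1$, yielding $K(t)^\top X(t) \geq -1$, i.e.\ $A(t) \geq 0$ almost surely.

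The only real obstacle is the casework in the last display: one must be careful that truncating $x_{i,\min}$ (resp.\ $x_{i,\max}$) at $0$ still yields a valid upper bound even when the asset's returns never go negative (resp.\ never go positive), which is where the $\min\{0,\cdot\}$ and $\max\{0,\cdot\}$ notation becomes essential rather than cosmetic. Once that step is in place, combining $A(t) \geq 0$ a.s.\ with $B(t) \geq 0$ and the inductive hypothesis $V(t-1) \geq 0$ a.s.\ delivers $\mathbb{P}(V(t) \geq 0) = 1$ for every $t \geq 1$.
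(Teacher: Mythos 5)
Your proposal is correct and follows essentially the same route as the paper's proof: induction on $t$, nonnegativity of the cost factor directly from the second condition, and nonnegativity of the return factor by bounding $K(t)^\top X(t)$ from below over the support via the long/short decomposition and the truncations $\min\{0,x_{i,\min}\}$, $\max\{0,x_{i,\max}\}$. The only cosmetic difference is that you perform the lower bound in two steps (first to $x_{i,\min}$, $x_{i,\max}$, then the sign casework), whereas the paper bounds directly against the truncated quantities in one step.
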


\begin{proof} 
See~\ref{appendix: proofs in preliminaries}.
\end{proof}



\begin{definition}[Turnover Rate] \rm
For $t = 1, 2, \dots, T$, the \emph{turnover rate} of a portfolio at period~$t$ is defined as $TR(t) := |K(t) - K(t-1)|^\top \mathbf{1} = \sum_{i=1}^n |K_i(t) - K_i(t-1)|$. 
\end{definition}


\begin{remark}[Turnover Rate Constraint] \rm \label{remark: Turnover Rate Constraint}
Suppose the costs charged are the same for all assets, i.e., $c_i(t) = c \in [0,1]$ for all $i=1, \dots, n$. Then condition~\eqref{eq: survival constraint 1} implies that the turnover rate satisfies  $TR(t) < \frac{1}{c}$ for all $ t = 1, 2, \dots, T.$ 
Moreover, when considering the turnover cost rate $c$, we use $c_{\max}$ to denote the \emph{turnover cost limit}, i.e., we require that $|K(t) - K(t-1)|^\top C(t)\leq c_{\max} < 1$ for all $t$. This constraint restricts the weights to be adjusted at period~$t$. 
In the sequel, we record the totality of the trading constraints described above in the following definition.
\end{remark}

\begin{definition}[Totality of the Trading Constraints] \rm \label{definition: totality of the trading constraints}
Let $\mathcal{K}$ be the totality of the trading constraints, including short selling, leverage \eqref{eq: leverage constraint}, diversified holding \eqref{eq: diversification constraint}, and survival constraints~\eqref{eq: survival constraint 1} on portfolio weight $K$.
\end{definition}


\begin{remark}[Convexity and Compactness of $\mathcal{K}$] \rm
Note that short selling, leverage, diversified holding, and survival constraints are defined in $\mathbb{R}^n$ with linear inequalities. 
Each constraint forms a convex set.  Additionally, since each constraint set above is bounded and closed, the intersection of these sets, $\mathcal{K}$, is convex and compact.
\end{remark}

\subsection{Distributional Robust Optimal Portfolio}

For $t=1,2,\dots, T$, with $V(0) > 0$, let $U_t$ be a continuously differentiable and concave utility function. We consider the running objective
\begin{align}
J_p(t; K(t), K(t-1)) 
& := \mathbb{E}_p \left[ U_t \left( \frac{V(t)}{V(t-1)} \right)  \right] \notag \\
& = \mathbb{E}_p \left[ U_t \left(   (1 + K(t)^\top  X(t))(1 - |K(t) - K(t-1)|^\top C(t)) \right)  \right], \label{eq: running expected utility}
\end{align}
where $\mathbb{E}_p[\cdot]$ denotes the expectation operator with respect to the unknown probability distribution~$p \in S_m$. Here, $S_m$ is a probability simplex~set defined as
$
S_m := \left\{ p \in \mathbb{R}^m_+ : p^\top \textbf{1} = 1,\ p_j \geq 0,\ j=1,2,\dots ,m  \right\}
$
where $\textbf{1} \in \mathbb{R}^m$ is the one-vector and $\mathbb{R}^m_+ := \left\{x= [ x_1\ x_2 \ \cdots \ x_m ]^\top \in \mathbb{R}^m: x_i \geq 0, i=1,2,\dots,m \right\}$.
For notational simplicity, we may sometimes write $J_p(t)$ instead of $J_p(t;K(t),K(t-1))$. Some technical results related to the properties of the running objective are collected in~\ref{appendix: some technical results}.

Assume that the ambiguous return distribution set is of the convex polyhedral form 
$
\mathcal{P} := \{ p\in S_m: A_0p = d_0,\ A_1p \leq d_1\},
$
which is formed by finite linear inequalities and equalities where $A_0 \in \mathbb{R}^{m_0\times m},\ d_0 \in \mathbb{R}^{m_0},\ A_1 \in \mathbb{R}^{m_1 \times m}$ and $d_1 \in \mathbb{R}^{m_1}$.
Given $K(t-1) \in \mathcal{K}$, we seek to find the weight $K(t) \in \mathcal{K}$ that solves the distributional robust optimal portfolio problem 
\begin{align} \label{problem: distributional robust optimal portfolio problem }
	\max_{K(t) \in \mathcal{K}} \; \inf_{p\in \mathcal{P}} J_p(t;K(t),K(t-1))
\end{align}
for $t = 1, 2, \dots, T.$
The following result presents an equivalent optimization problem via duality theory.


\begin{theorem}[An Equivalent Distributional Robust Optimization Problem] \label{theorem: An Equivalent Distributional Robust Optimization Problem}
Let $t = 1,2,\dots,T$, given $K(t-1) \in \mathcal{K}$, the distributional robust optimal portfolio problem~\eqref{problem: distributional robust optimal portfolio problem } 
is equivalent to 
\begin{align} \label{problem: equivalent DRO}
	&\max_{K(t), \nu, \lambda} \; \min_{j} ( q(K(t)) + A_0^\top \nu + A_1^\top \lambda)_j -\nu^\top d_0 - \lambda^\top d_1 \\
	& \text{\rm s.t.} \; K(t) \in \mathcal{K}, \lambda \succeq 0 \notag
\end{align}
where 
$
q(K(t)) = [q(K(t))_1\; q(K(t))_2\; \cdots \; q(K(t))_m]^\top
$
with the $j$th component satisfying
\begin{align} \label{eq: q_K_t_j}
		q(K(t))_j = U_t \left( (1 + K(t)^\top x^j) ( 1 - |K(t) - K(t-1)|^\top C(t)) \right),\; j = 1, 2, \dots, m.
\end{align}
\end{theorem}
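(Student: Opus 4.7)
The plan is to reduce the inner infimum $\inf_{p \in \mathcal{P}} J_p(t; K(t), K(t-1))$ to a finite-dimensional linear program in $p$, apply strong LP duality, and then collapse the resulting inner supremum with the outer maximization over $K(t) \in \mathcal{K}$. First, I would exploit the discreteness of the return distribution. Since $X(t)$ takes values in the finite set $\{x^1, \dots, x^m\}$ with probabilities $p_j$, the running objective \eqref{eq: running expected utility} becomes
\begin{equation*}
J_p(t; K(t), K(t-1)) = \sum_{j=1}^m p_j\, U_t\!\left((1+K(t)^\top x^j)(1-|K(t)-K(t-1)|^\top C(t))\right) = p^\top q(K(t)),
\end{equation*}
with $q(K(t))_j$ exactly as in \eqref{eq: q_K_t_j}. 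Hence the inner problem reads $\inf_p\, p^\top q(K(t))$ subject to $A_0 p = d_0$, $A_1 p \leq d_1$, $\mathbf{1}^\top p = 1$, and $p \geq 0$, which is a standard LP in $p$.

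Next I would invoke LP strong duality. Because we assume $\mathcal{P}$ is nonempty and $\mathcal{P} \subset S_m$ is bounded, the primal is feasible with finite optimal value, so strong duality applies (no constraint qualification beyond feasibility is needed for an LP). Introducing multipliers $\nu \in \mathbb{R}^{m_0}$ for $A_0 p = d_0$, a scalar $\eta \in \mathbb{R}$ for the simplex equality $\mathbf{1}^\top p = 1$, and $\lambda \succeq 0$ for $A_1 p \leq d_1$, the Lagrangian becomes $\bigl(q(K(t)) + A_0^\top \nu + \eta \mathbf{1} + A_1^\top \lambda\bigr)^\top p - \nu^\top d_0 - \eta - \lambda^\top d_1$ (modulo the $p \geq 0$ multiplier, which just produces the nonnegativity condition on the coefficient of $p$). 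Minimizing over $p \geq 0$ is bounded below only if $q(K(t)) + A_0^\top \nu + \eta \mathbf{1} + A_1^\top \lambda \succeq 0$, yielding the dual
\begin{equation*}
\sup_{\nu,\, \lambda \succeq 0,\, \eta}\; -\nu^\top d_0 - \eta - \lambda^\top d_1 \quad \text{s.t.} \quad q(K(t)) + A_0^\top \nu + \eta \mathbf{1} + A_1^\top \lambda \succeq 0.
\end{equation*}

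The key algebraic manoeuvre is to eliminate $\eta$. The constraint is equivalent to the componentwise bound $\eta \geq -\bigl(q(K(t)) + A_0^\top \nu + A_1^\top \lambda\bigr)_j$ for every $j$, i.e., $\eta \geq -\min_j\bigl(q(K(t)) + A_0^\top \nu + A_1^\top \lambda\bigr)_j$. Since the dual objective is decreasing in $\eta$, the optimal choice is $\eta^\star = -\min_j\bigl(q(K(t)) + A_0^\top \nu + A_1^\top \lambda\bigr)_j$, and substituting back gives the inner dual value
\begin{equation*}
\sup_{\nu,\, \lambda \succeq 0}\; \min_j\bigl(q(K(t)) + A_0^\top \nu + A_1^\top \lambda\bigr)_j - \nu^\top d_0 - \lambda^\top d_1.
\end{equation*}
Finally, I would combine this with the outer $\max_{K(t) \in \mathcal{K}}$. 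Since the supremum over $(\nu, \lambda)$ is independent of the $\max$ over $K(t)$, the two operations commute into a single joint maximization over $(K(t), \nu, \lambda)$ with $K(t) \in \mathcal{K}$ and $\lambda \succeq 0$, yielding exactly \eqref{problem: equivalent DRO}.

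The main obstacle I anticipate is bookkeeping the separation of $\mathbf{1}^\top p = 1$ from the rest of the equality system $A_0 p = d_0$: it is precisely this scalar constraint whose dual multiplier $\eta$ becomes a free maximum over $\mathbb{R}$ and, after elimination, manifests as the $\min_j(\cdot)$ term that distinguishes the equivalent formulation \eqref{problem: equivalent DRO} from a bare LP dual. The rest is a routine application of strong LP duality and commutativity of independent optimizations.
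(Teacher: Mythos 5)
Your proposal is correct and follows essentially the same route as the paper: reduce the inner problem over the discrete distribution to $\inf_{p}\, p^\top q(K(t))$ over the polyhedron $\mathcal{P}$, pass to the dual to produce the $\min_j(\cdot)$ term, and merge the resulting supremum over $(\nu,\lambda)$ with the outer maximization over $K(t)\in\mathcal{K}$. The only (cosmetic) difference is that you fully dualize the normalization $\mathbf{1}^\top p = 1$ and then eliminate its multiplier $\eta$, whereas the paper keeps $p \in S_m$ as a hard constraint and extracts $\min_j$ directly from minimizing a linear function over the simplex, invoking Slater's condition where your appeal to LP strong duality under feasibility and boundedness suffices.
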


\begin{proof} 
See~\ref{appendix: proofs in preliminaries}.
\end{proof}

\begin{remark} \rm
It is worth noting that Theorem~\ref{theorem: An Equivalent Distributional Robust Optimization Problem} above generalizes the duality result in \cite[Theorem~2.1]{hsieh2024solving} to include the turnover cost. 
\end{remark}

\section{Extended Supporting Hyperplane Approximation}\label{section: Extended Supporting Hyperplane Approximation}

To facilitate computational efficiency in solving Problem~\eqref{problem: equivalent DRO} in practical large-scale portfolio optimization, this section significantly extends the supporting hyperplane approximation approach proposed in \cite{hsieh2024solving}.
While the original method addresses log-utility without transaction costs, our approach accounts for a general class of additively separable utilities and incorporates market friction arising from turnover transaction costs. This extension greatly broadens the applicability and practical relevance of the method.
Given the shorthand expressions $x := K(t)^\top X(t)$ and $c := |K(t) - K(t-1)|^\top C(t)$ and the bounds $x_{\min} \in (-1,0]$, $x_{\max}> 0$, $c_{\min}=0$, and $c_{\max} \in [0,1)$, define an auxiliary mapping $f_t: [x_{\min}, x_{\max}] \times [c_{\min}, c_{\max}] \to \mathbb{R}$ as 
\begin{align} \label{eq: log-function}
f_t(x, c) := U_t \left( ( 1 + x )( 1 - c ) \right).
\end{align}
for some $U_t$ that is a continuously differentiable, strictly monotonic, and concave utility function for~$t = 1, \dots, T$.

\begin{definition}[Additively Separable Utility] \rm \label{definition: additively separable utility}
For $t=1,2, \dots, T$, let $f_t(x, c) $ be defined as in~\eqref{eq: log-function}, where it is a continuously differentiable, strictly monotonic, and concave function. We say that~$f_t(x, c)$ is \emph{additively separable} in $x$ and~$c$ if there exist continuously differentiable functions $ \phi_{1,t}, \phi_{2,t}: \mathbb{R} \to \mathbb{R}$ and constants~$\alpha_t$ and~$\beta_t$ such that 
$\phi_{1,t}(x)$  is strictly concave and strictly increasing, and $\phi_{2,t}(c)$ is concave and strictly decreasing, and 
$$
f_t(x, c) = U_t((1+x)(1-c)) = \alpha_t \cdot \phi_{1,t}(x) + \beta_t \cdot  \phi_{2,t}(c)
$$
where $\alpha_t >0$ and $\beta_t >0$.
\end{definition}

\begin{example}[Illustration of Additively Separable Utilities] \rm \label{example: Illustration of Additively Separable Utilities}
The above definition is common. For example, if $ U_t $ takes a logarithmic form, $ U_t((1+x)(1-c)) = \gamma_t \log((1+x)(1-c)) $, where $ \beta_t $ is a time-dependent parameter, then we have $ \phi_{1,t}(x) = \log(1+x) $, $ \phi_{2,t}(c) = \log(1-c) $, and $\alpha_t = \beta_t = \gamma_t$. 
This gives us
\[
U_t((1+x)(1-c)) = \gamma_t \log((1+x)(1-c)) =  \alpha_t \cdot \phi_{1,t}(x) + \beta_t \cdot \phi_{2,t}(c).
\]
which assures the additive separability; see also Section~\ref{section: Successive Partition Points for Log-Additive Separable Utility} for further development with this log-additive separable utility and Section~\ref{section: Empirical Studies} for large-scale empirical studies.
As the second example, if $ U_t $ takes a power form, $ U_t((1+x)(1-c)) = \gamma_t [(1+x)^\delta + (1-c)^\delta ]$, where $ \gamma_t $ is a time-dependent parameter and $ \delta \in (0, 1)$. Then taking \( \phi_{1,t}(x) = (1+x)^\delta \), $ \phi_{2,t}(c) = (1-c)^\delta $, $ \alpha_t = \beta_t = \gamma_t $ yields
$
	 U_t((1+x)(1-c)) = \gamma_t [(1+x)^\delta + (1-c)^\delta ] =\alpha_t \cdot \phi_{1,t}(x) + \beta_t \cdot \phi_{2, t}(c).
$
Finally, if $ U_t $ takes the form of a Constant Relative Risk Aversion (CRRA) utility function, i.e.,
$
U_t((1+x)(1-c)) = \gamma_t \left( \frac{(1+x)^{1-\vartheta_t}}{1-\vartheta_t} + \frac{(1-c)^{1-\vartheta_t}}{1-\vartheta_t} \right),
$
where $ \gamma_t $ and $\vartheta_t$ are time-dependent parameters with $\vartheta_t > 1$. Then, by taking $ \phi_{1,t}(x) = \frac{(1+x)^{1-\vartheta_t}}{1-\vartheta_t}$, $\phi_{2,t}(c) = \frac{(1-c)^{1-\vartheta_t}}{1 - \vartheta_t}$, and $ \alpha_t = \beta_t =  \gamma_t $, it gives us
\[
U_t((1+x)(1-c)) =  \gamma_t \left( \frac{(1+x)^{1-\vartheta_t}}{1-\vartheta_t} + \frac{(1-c)^{1-\vartheta_t}}{1- \vartheta_t} \right) = \alpha_t \cdot \phi_{1,t}(x) + \beta_t \cdot \phi_{2, t}(c).
\]
All three utility functions satisfy the requirements of being continuously differentiable, strictly monotonic, and concave. \hfill \qedsymbol  
\end{example}

In the sequel, we shall assume that the utility function  $f_t(x,c)= U_t((1+x)(1-c))$  is additively separable in $ x $ and $ c $ for each time $ t $.
Having defined this, we are now ready to approximate the function $f_t$ by the hyperplanes derived from partition points of the intervals $[x_{\min}, x_{\max}]$ and $[0, 1)$. 
In particular, take the partitions $\{x_{l}\}_{l=0}^{M_x}$ and $ \{c_{r}\}_{r=0}^{M_c}$ for $l=0, \dots, M_x$ and $r = 0, \dots, M_c$~with 
\begin{align*}
x_{\min} = x_0 < x_1 < \dots < x_{M_x} = x_{\max} \ \text{ and } \
c_{\min} = c_0 < c_1  < \dots < c_{M_c} = c_{\max}.
\end{align*}  
Then, for $l = 0,1, \dots, M_x$ and $r = 0,1, \dots, M_c$, the hyperplanes are of the form
\begin{align}\label{eq: hyperplane expression}
h_{l, r}(x, c) := [a_l \; b_r] \begin{bmatrix}
	x \\ c
\end{bmatrix} + \gamma_{l,r}
\end{align}
To determine the coefficients $a_l, b_r,$ and $\gamma_{l ,r}$, the hyperplane must match the function $f(x, c)$ at each partition point $(x_l, c_r)$. That is, we require
$
h_{l, r}(x_l, c_r) = f_t(x_l, c_r),
$
which implies that
$
a_l x_l + b_r c_r + \gamma_{l, r} = \alpha_t \cdot \phi_{1, t}(x_l) + \beta_t \cdot \phi_{2, t}(c_r). 
$
Rearranging this expression yields the intercept coefficient $\gamma_{l, r}$.
Now, to match the slope of the function at each partition point, we calculate the partial derivatives of~$f_t$. 
With these derivatives, we can match the slopes of the hyperplane:
\begin{align*}
a_l  = \frac{\partial f_t}{\partial x} (x_l, c_r) = \alpha_t \cdot \phi_{1,t}'(x_l)  
\text{ and }
b_r = \frac{\partial f_t}{\partial c} (x_l, c_r) = \beta_t \cdot \phi_{2,t}'(c_r). 
\end{align*}
Using the values of $a_l$ and $b_r$, we solve for $\gamma_{l, r}$ using the function value at the partition point, which yields
$
\gamma_{l, r} = \alpha_t \cdot  \phi_{1,t}(x_l) + \beta_t \cdot \phi_{2,t}(c_r) - a_l x_l - b_r c_r.
$

For example, suppose that $f_t(x, c) = f(x, c) = \log(1+x) + \log (1-c)$. Then the associated hyperplane~$h_{l, r} $ is of the form of \eqref{eq: hyperplane expression} with coefficients
$a_l = \frac{1}{ 1 + x_l}, b_r = -\frac{1}{1 - c_r}$ and $\gamma_{l, r} = f(x_l, c_r) - a_l x_l-b_r c_r$.


\subsection{Robust Linear Program Formulation}

We now apply the idea of the supporting hyperplanes above to the distributional robust portfolio optimization problem~\eqref{problem: equivalent DRO} stated in Section~\ref{section: Preliminaries}.  Specifically, by taking $x := K(t)^\top x^j$ and $c := |K(t) - K(t-1)|^\top C(t)$, we set the associated minimum and maximum points for $x$ and $c$ as follows:
$x_{\min} = \min_j K(t)^\top x^j$, $x_{\max} = \max_j K(t)^\top x^j$,
$c_{\min} = 0$, and $c_{\max}$ is defined in Remark~\ref{remark: Turnover Rate Constraint}.
Then, for~$j=1,\dots,m$, we define 
$
q_j(K(t)) := U_t \left( (1 + K(t)^\top x^j ) ( 1 - |K(t) - K(t-1)|^\top C(t)) \right).
$
We can now approximate it via these hyperplanes~\eqref{eq: hyperplane expression} as follows:
\begin{align} 
q_j(K(t)) 
&\approx   \min_{l,r} \left\{ h_{l,r} \left( K(t)^\top x^j, \ |K(t)-K(t-1)|^\top C(t) \right) \right\} \label{eq: q_j approximation}\\
& = \min_{l,r} \left\{ a_l (K(t)^\top x^j)+ b_r (|K(t) - K(t-1)|^\top C(t)) + \gamma_{l,r} \right\} \nonumber
\end{align}
where $x^j := [x_1^j\; x_2^j \cdots \; x_n^j]^\top$ for $j = 1, 2, \dots, m$.
Additionally, the distributional robust portfolio optimization problem~\eqref{problem: distributional robust optimal portfolio problem } can be reformulated as the following robust linear program.

\begin{problem}[Approximate Robust Linear Program]\label{problem: ELG approxiamtion} \rm
For $t = 1, 2, \dots, T$, take $Z_j, W \in \mathbb{R}$, $K^+_i(t) > 0$ and $K_I^-(t) < 0$.  Then, the equivalent distributional robust optimal problem problem~\eqref{problem: equivalent DRO} can be approximated by the following robust linear~program: 
{\small	\begin{align*}
	& \max_{K(t), \nu, \lambda}\ W - \nu^\top d_0 - \lambda^\top d_1\\
	& {\rm s.t.} \;\;  \sum_{i=1}^n |K_i(t)| \leq L,\\
	& \qquad \sum_{i=1}^{n} K_i^{+}(t) |\min\{0, x_{i, \min}\}| - \sum_{i=1}^{n} K_i^{-}(t) \max\{0, x_{i, \max}\} \leq 1,\\
	&\qquad | K(t) - K(t-1))|^\top C(t) \leq 1,\\
	&\qquad \lambda \succeq 0,\\
	&\qquad Z_j \leq h_{l,r} \left( K(t)^\top x^j, \ |K(t)-K(t-1)|^\top C(t) \right), \; j = 1, \dots, m,\ l = 0,  \dots, M_x,\ r = 0, \dots, M_c,\\
	&\qquad W \leq Z_j + ( A_0^{\top} \nu + A_1^{\top} \lambda)_j.
\end{align*}
}	where
$
h_{l, r} \left( K(t)^\top x^j, \ |K(t) - K(t-1)|^\top C(t) \right) = a_l ( K(t)^\top x^j ) + b_r ( |K(t) - K(t-1)|^\top C(t)) + \gamma_{l, r}.
$
\end{problem}

\begin{remark} \rm
By taking zero turnover costs, i.e., $C(t) \equiv 0$, the robust linear program above reduces to the one considered in \cite{hsieh2024solving}.
\end{remark}

\subsection{Approximation Error Analysis} 
Below, we define the approximation error induced by the supporting hyperplane approach.  We shall then show that the total approximation error can be separated into the approximation errors for~$x$ and $c$, respectively.

\begin{definition}[Approximation Error Functions] \rm
For $t=1,2,\dots,T$, we denote $x := K(t)^\top x^j$ and $c := |K(t)-K(t-1)|^\top C(t)$.
Let $l = 0, 1, \dots, M_x$, and $r = 0, 1, \dots, M_c$. For $x \in [x_{\min}, x_{\max}]$ with $x_{\min} > -1$ and $c \in [c_{\min}, c_{\max}] \subseteq [0,1)$, we consider the mapping~$e: [x_{\min}, x_{\max}] \times [c_{\min}, c_{\max}] \to \mathbb{R}$ defined~by 
\begin{align} \label{eq: approximation error in total}
	e(x,c) := \left| f_t(x,c) - \min_{l,r} h_{l,r}(x, c) \right|,
\end{align}
which represents the approximation error between the hyperplanes and the objective function~$f_t$ defined in \eqref{eq: log-function}.
Moreover, we define $e_l: [x_{\min}, x_{\max}] \to \mathbb{R}$~by
\begin{align}\label{eq: approximation error along x-direction}
	e_l(x) := a_l (x - x_l) + \alpha_t \cdot \phi_{1,t}(x_l) - \alpha_t \cdot \phi_{1,t}(x),    
\end{align} 
and $e_r: [c_{\min}, c_{\max}] \to \mathbb{R}$ by
\begin{align}\label{eq: approximation error along c-direction}
	e_r(c) := b_r (c - c_r) + \beta_t \cdot \phi_{2,t}(c_r)  - \beta_t \cdot \phi_{2,t}(c).
\end{align}
where $a_l   = \alpha_t \cdot \phi_{1,t}'(x_l) $, $b_r = \beta_t \cdot \phi_{2,t}'(c_r)$, and 
$
	\gamma_{l, r}  = \alpha_t \cdot \phi_{1,t}(x_l) + \beta_t \cdot \phi_{2,t}(c_r)- a_l x_l - b_r c_r.
$	
\end{definition} 

Next, we examine the behavior of the approximation errors $e_l(x)$ and $e_r(c)$.

\begin{lemma}[Limiting Behavior and Monotonicity of Approximate Error] \label{lemma: Limit Behavior And Monotonicity of Approximate Error}
Let $\{x_l\}_{l=0}^{M_x}$ be a partition of $[x_{\min}, x_{\max}]$ for $l = 0,1,\dots,M_x$, and let $\{c_r\}_{r=0}^{M_c}$ be a partition of $[c_{\min}, c_{\max}]$ for $r = 0,1,\dots,M_c$. The following statements hold true.

\begin{itemize}
	\item[(i)] For $l \neq M_x$, the approximation error for $x$, $e_l(x)$, is strictly increasing in $(x_l, x_{\max}]$. For $l \neq 0$, the error $e_l(x)$ is strictly decreasing in $[x_{\min}, x_l)$. Additionally, $\lim_{x \to x_l} e_l(x) = 0$.
	
	\item[(ii)] For $l \neq M_c$, the approximation error for $c$, $e_r(c)$, is strictly increasing in $(c_r, c_{\max}]$. For $r \neq 0$, the error $e_r(c)$ is strictly decreasing in $[c_{\min}, c_r)$. Additionally, $\lim_{c \to c_r} e_r(c) = 0$.
\end{itemize}
\end{lemma}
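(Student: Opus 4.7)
The plan is to prove both parts by the same differential argument, since (i) and (ii) are structurally identical; I will treat (i) in detail and obtain (ii) by swapping the roles of $(\phi_{1,t}, x_l, \alpha_t)$ with $(\phi_{2,t}, c_r, \beta_t)$.

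First I would dispatch the limit claim. By the explicit definition \eqref{eq: approximation error along x-direction}, evaluation at $x = x_l$ gives $e_l(x_l) = a_l(x_l - x_l) + \alpha_t \phi_{1,t}(x_l) - \alpha_t \phi_{1,t}(x_l) = 0$. Since $\phi_{1,t}$ is continuously differentiable, $e_l$ is continuous on $[x_{\min}, x_{\max}]$, and therefore $\lim_{x\to x_l} e_l(x) = e_l(x_l) = 0$. The same substitution yields $e_r(c_r) = 0$ and continuity of $e_r$, giving the corresponding limit.

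Next, I would establish the monotonicity by differentiating. Using $a_l = \alpha_t \phi_{1,t}'(x_l)$,
\begin{equation*}
e_l'(x) = a_l - \alpha_t \phi_{1,t}'(x) = \alpha_t\bigl(\phi_{1,t}'(x_l) - \phi_{1,t}'(x)\bigr).
\end{equation*}
By Definition~\ref{definition: additively separable utility}, $\phi_{1,t}$ is strictly concave, so $\phi_{1,t}'$ is strictly decreasing. Combined with $\alpha_t > 0$, this gives $e_l'(x) > 0$ whenever $x > x_l$ and $e_l'(x) < 0$ whenever $x < x_l$. Hence $e_l$ is strictly increasing on $(x_l, x_{\max}]$ (which is nontrivial precisely when $l \neq M_x$) and strictly decreasing on $[x_{\min}, x_l)$ (nontrivial precisely when $l \neq 0$). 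The analogous computation $e_r'(c) = \beta_t(\phi_{2,t}'(c_r) - \phi_{2,t}'(c))$, together with strict concavity of $\phi_{2,t}$ and $\beta_t > 0$, yields part (ii).

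The main obstacle is a subtle wrinkle in hypotheses rather than in the mathematics: Definition~\ref{definition: additively separable utility} only requires $\phi_{2,t}$ to be concave (not strictly), yet the argument above needs $\phi_{2,t}'$ to be strictly decreasing in order to conclude \emph{strict} monotonicity of $e_r$. I would resolve this either by invoking strict concavity of $\phi_{2,t}$ (which all three utilities in Example~\ref{example: Illustration of Additively Separable Utilities} satisfy, and which appears to be the implicit standing assumption) or by explicitly strengthening the hypothesis at this point in the text. Apart from this, the proof is a one-line sign computation powered by strict concavity.
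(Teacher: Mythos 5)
Your proof is correct and follows essentially the same route as the paper's: differentiate $e_l$ and $e_r$, use the strict monotonicity of $\phi_{1,t}'$ and $\phi_{2,t}'$ (coming from strict concavity) to sign the derivative on either side of the partition point, and invoke continuity at $x_l$ (resp.\ $c_r$) for the limit. The hypothesis wrinkle you flag is genuine --- Definition~\ref{definition: additively separable utility} only requires $\phi_{2,t}$ to be concave, yet the paper's own proof of part (ii) likewise asserts strict concavity of $\phi_{2,t}$, so your proposed resolution (treating strict concavity as the implicit standing assumption, satisfied by all the examples) is exactly what the paper does tacitly.
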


\begin{proof} 
See Appendix~\ref{appendix: proofs in extended supporting hyperplane approximaiton}.
\end{proof}

With the aid of Lemma~\ref{lemma: Limit Behavior And Monotonicity of Approximate Error}, the following theorem indicates that the maximum approximation error $\sup_{x, c} e(x, c)$ induced by the hyperplane approximation approach is separable and can be represented as the sum of the approximation errors along the partitions for $x$ and $c$. 

\medskip
\begin{theorem}[Separable Maximum Approximation Error]\label{theorem: Separable Maximum Approximation Error}  
Let $h_{l,r}(x, c)$ be the hyperplanes defined in~\eqref{eq: hyperplane expression} that approximate $f_t(x, c) = U_t((1+x)(1-c))$. Then, the maximum approximation error is separable, i.e., 
\[
\sup_{x,c} e(x, c) = \sup_{x} \min_{l} e_l(x) + \sup_{c} \min_{r} e_r(c) 
\]
where $e_l(x)$ and $e_r(c)$ are defined in \eqref{eq: approximation error along x-direction} and \eqref{eq: approximation error along c-direction}, respectively, for $l = 0, 1, \dots, M_x$ and $r = 0, 1, \dots, M_c.$
\end{theorem}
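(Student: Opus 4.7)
The plan is to exploit the additive separability of $f_t$ together with the fact that the hyperplane coefficients decouple cleanly into an $x$-part and a $c$-part, so that the pointwise error itself is a sum of a function of $l$ (and $x$) and a function of $r$ (and $c$). Once this is in hand, the joint minimization over $(l,r)$ separates, and the joint supremum over $(x,c)$ separates because the feasible sets for $x$ and $c$ are product sets.

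First, I would rewrite the hyperplane using the coefficient formulas already derived in the excerpt: since $a_l = \alpha_t \phi_{1,t}'(x_l)$, $b_r = \beta_t \phi_{2,t}'(c_r)$, and $\gamma_{l,r} = \alpha_t \phi_{1,t}(x_l) + \beta_t \phi_{2,t}(c_r) - a_l x_l - b_r c_r$, one obtains
\[
h_{l,r}(x,c) \;=\; \bigl[\alpha_t \phi_{1,t}(x_l) + a_l(x - x_l)\bigr] + \bigl[\beta_t \phi_{2,t}(c_r) + b_r(c - c_r)\bigr].
\]
Combining this with $f_t(x,c) = \alpha_t \phi_{1,t}(x) + \beta_t \phi_{2,t}(c)$ gives the algebraic identity
\[
h_{l,r}(x,c) - f_t(x,c) \;=\; e_l(x) + e_r(c),
\]
which is the crux of the separability.

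Second, I would invoke the concavity hypotheses on $\phi_{1,t}$ and $\phi_{2,t}$ (together with $\alpha_t,\beta_t > 0$) to observe that each supporting hyperplane of a concave function lies above the function, so that $e_l(x) \geq 0$ on $[x_{\min},x_{\max}]$ and $e_r(c) \geq 0$ on $[c_{\min},c_{\max}]$. In particular $h_{l,r}(x,c) \geq f_t(x,c)$ everywhere, hence $\min_{l,r} h_{l,r}(x,c) \geq f_t(x,c)$, and the absolute value in the definition of $e(x,c)$ can be dropped:
\[
e(x,c) \;=\; \min_{l,r}\bigl[e_l(x) + e_r(c)\bigr].
\]
Because the indices $l$ and $r$ range independently and each summand depends on only one index, this joint minimum factors as $\min_l e_l(x) + \min_r e_r(c)$.

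Finally, taking the supremum over $(x,c) \in [x_{\min},x_{\max}] \times [c_{\min},c_{\max}]$ and using that the feasible region is a product set, the supremum of a sum of single-variable functions with independent arguments splits:
\[
\sup_{x,c} e(x,c) \;=\; \sup_x \min_l e_l(x) \;+\; \sup_c \min_r e_r(c),
\]
which is the claimed identity. I do not anticipate a serious obstacle: the proof is essentially an algebraic rearrangement enabled by Definition~\ref{definition: additively separable utility}, with Lemma~\ref{lemma: Limit Behavior And Monotonicity of Approximate Error} available (though not strictly necessary for the identity itself) to guarantee the nonnegativity and finiteness needed to legitimately separate the $\min$ and $\sup$ operations. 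The one subtlety worth double-checking is that the separation of $\min_{l,r}$ and of $\sup_{x,c}$ both rely on the product structure of the index set and of the domain, which holds by construction here.
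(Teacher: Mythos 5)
Your proposal is correct and follows essentially the same route as the paper's proof: rewrite $h_{l,r}(x,c)-f_t(x,c)=e_l(x)+e_r(c)$ via additive separability, drop the absolute value using $e_l, e_r \geq 0$, and split the joint $\min$ and $\sup$ over the product index set and product domain. The only cosmetic difference is that you obtain nonnegativity directly from the first-order concavity inequality (tangent lines of a concave function lie above it), whereas the paper cites the monotonicity statements of Lemma~\ref{lemma: Limit Behavior And Monotonicity of Approximate Error}; both are valid and equivalent here.
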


\begin{proof} 
See~\ref{appendix: proofs in extended supporting hyperplane approximaiton}.
\end{proof}

\medskip
\begin{corollary} \label{corollary: Separable Maximum Approximation Error for Partitions}
For any partition $\{x_l\}_{l=0}^{M_x}$ and $\{c_r\}_{r=0}^{M_c}$, and for $x \in [x_p, x_{p+1}]$ and $c \in [c_q, c_{q+1}]$, where $p=0,1,\dots,M_x-1$ and $q=0,1,\dots,M_c-1$, the maximum approximation error over the subintervals is separable, i.e., 
\[
\sup_{ \substack{ x \in [x_p, x_{p+1}]\\ c \in [c_q, c_{q+1}]} } e(x, c) = \sup_{x \in [x_p, x_{p+1}]} \min_{l} e_l(x) + \sup_{c \in [c_q, c_{q+1}]} \min_{r} e_r(c).
\]
\end{corollary}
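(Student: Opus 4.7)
The plan is to observe that the proof of Theorem~\ref{theorem: Separable Maximum Approximation Error} in fact establishes a \emph{pointwise} decomposition
\[
e(x, c) = \min_{l} e_l(x) + \min_{r} e_r(c)
\]
valid for every $(x, c) \in [x_{\min}, x_{\max}] \times [c_{\min}, c_{\max}]$, and the corollary follows by restricting $(x, c)$ to the rectangle $[x_p, x_{p+1}] \times [c_q, c_{q+1}]$ and then invoking separability of the supremum over a product set. In other words, nothing in the argument of Theorem~\ref{theorem: Separable Maximum Approximation Error} ties it to the global domain, so the same reasoning transports verbatim to any Cartesian product of subintervals.

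First I would re-derive the pointwise identity. Using additive separability of $f_t$ together with $a_l = \alpha_t \phi_{1,t}'(x_l)$, $b_r = \beta_t \phi_{2,t}'(c_r)$, and $\gamma_{l,r} = \alpha_t \phi_{1,t}(x_l) + \beta_t \phi_{2,t}(c_r) - a_l x_l - b_r c_r$, a direct rearrangement gives
\begin{align*}
h_{l,r}(x, c) - f_t(x, c) &= \bigl[a_l(x - x_l) + \alpha_t \phi_{1,t}(x_l) - \alpha_t \phi_{1,t}(x)\bigr] \\
&\quad + \bigl[b_r(c - c_r) + \beta_t \phi_{2,t}(c_r) - \beta_t \phi_{2,t}(c)\bigr] = e_l(x) + e_r(c).
\end{align*}
Concavity of $\phi_{1,t}$ and $\phi_{2,t}$, together with $\alpha_t, \beta_t > 0$, imply that each supporting tangent lies above the corresponding function, so $e_l(x), e_r(c) \geq 0$. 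The absolute value in~\eqref{eq: approximation error in total} may therefore be dropped, yielding
\[
e(x, c) = \min_{l, r}\bigl[e_l(x) + e_r(c)\bigr] = \min_{l} e_l(x) + \min_{r} e_r(c),
\]
where the last equality uses that the two summands depend on disjoint index sets.

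Next, I would restrict $(x, c)$ to $[x_p, x_{p+1}] \times [c_q, c_{q+1}]$ and apply the elementary identity
\[
\sup_{(x, c) \in A \times B} \bigl[g(x) + h(c)\bigr] = \sup_{x \in A} g(x) + \sup_{c \in B} h(c)
\]
with $g(x) := \min_l e_l(x)$, $h(c) := \min_r e_r(c)$, $A := [x_p, x_{p+1}]$, and $B := [c_q, c_{q+1}]$. Since $g$ and $h$ are nonnegative, the sums and suprema behave well, and the stated separable bound follows immediately.

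The main obstacle is more conceptual than technical: one must recognize that Theorem~\ref{theorem: Separable Maximum Approximation Error} actually rests on a pointwise identity, not merely on the value of a single global supremum. Once that observation is in hand, the corollary is a one-line reduction to the separable-sup identity, and no further sign, monotonicity, or convexity arguments beyond those already invoked in Theorem~\ref{theorem: Separable Maximum Approximation Error} are required.
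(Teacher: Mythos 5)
Your proposal is correct and follows essentially the same route as the paper: the paper's own proof of this corollary likewise invokes the pointwise identity $e(x,c)=\left|\min_l e_l(x)+\min_r e_r(c)\right|$ established inside the proof of Theorem~\ref{theorem: Separable Maximum Approximation Error}, drops the absolute value via nonnegativity of $e_l$ and $e_r$, and separates the supremum over the product of subintervals. Your version merely makes the pointwise decomposition explicit rather than citing it, which is a faithful expansion of the same argument.
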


\begin{proof} 
See~\ref{appendix: proofs in extended supporting hyperplane approximaiton}.
\end{proof}


We now determine the hyperplanes of $e_l(x)$ and $e_r(c)$ from given $(x_{\min}, c_{\min})$ and compute the point that yields the corresponding maximum approximation~error.

\begin{lemma}[Characterization of Maximum Approximation Errors]  \label{lemma: Maximum Approximation Errors}
Fix partitions $\{ x_l \}_{l=0}^{M_x}$ and $\{ c_r \}_{r=0}^{M_c}$ such that $x_0 = x_{\min}, x_{M_x} = x_{\max}$, $c_0 = c_{\min}$ and $c_{M_r} = c_{\max}$. Given $x_p$ and $c_q$, where $p = 0, 1, \dots, M_x-1$ and $q = 0, 1, \dots, M_c-1$,  for $x \in [x_p, x_{p+1}]$ and $c \in [c_q, c_{q+1}]$, there exists a pair of functions $(x^{\prime}(x_{p+1}), c^{\prime}(c_{q+1}))$ such that 
\[    
	\sup_{ \substack{x \in [x_p, x_{p+1}] \\ c \in [c_q, c_{q+1}]} }  e(x, c)  
	= e_p( x^{\prime} ) + e_q( c^{\prime} ),
\]
where $	  x^{\prime}  = x^{\prime }(x_{p+1})   = \frac{	 \phi_{1,t}'(x_p)   x_p - \phi_{1,t}'(x_{p+1}) x_{p+1}  +   \phi_{1,t} (x_{p+1})  -  \phi_{1,t} (x_p)  }{\phi_{1,t}'(x_p) -  \phi_{1,t}'(x_{p+1}) }
$
and 
$
c^{\prime}  = c^{\prime}(c_{q+1})
= 	 \frac{	\phi_{2, t}'(c_q) c_q- 	\phi_{2, t}'(c_{q+1}) c_{q+1}  + \phi_{2, t}(c_{q+1}) - \phi_{2, t}(c_q)  }{ \phi_{2, t}'(c_q)  - \phi_{2, t}'(c_{q+1}) }.
$
\end{lemma}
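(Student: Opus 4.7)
The plan is to carry out the argument separately in the $x$ and $c$ directions after an initial separability reduction. First, I invoke Corollary~\ref{corollary: Separable Maximum Approximation Error for Partitions} to write
\[
\sup_{\substack{x \in [x_p,x_{p+1}]\\ c \in [c_q,c_{q+1}]}} e(x,c) \;=\; \sup_{x \in [x_p,x_{p+1}]} \min_{l} e_l(x) \;+\; \sup_{c \in [c_q,c_{q+1}]} \min_{r} e_r(c),
\]
so that it suffices to locate the maximizer of each one-dimensional envelope separately, which will give $x'$ and $c'$.

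Second, I argue that on the single interval $[x_p, x_{p+1}]$ only the two neighboring tangent hyperplanes, corresponding to indices $p$ and $p+1$, are active in the pointwise minimum. Since $e_l(x) = \alpha_t\bigl[\phi_{1,t}(x_l) + \phi_{1,t}'(x_l)(x-x_l) - \phi_{1,t}(x)\bigr]$ is the vertical gap between the tangent line $T_l$ of $\phi_{1,t}$ at $x_l$ and the curve, strict concavity of $\phi_{1,t}$ makes $\phi_{1,t}'$ strictly decreasing. Hence for $l<p$, $T_l$ has strictly larger slope than $T_p$ and they intersect at a point in $(x_l, x_p)$, so $T_l > T_p$ throughout $[x_p, x_{p+1}]$, giving $e_l(x) > e_p(x)$ there; the symmetric statement handles $l > p+1$. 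Thus $\min_l e_l(x) = \min\{e_p(x), e_{p+1}(x)\}$ on this subinterval.

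Third, I apply Lemma~\ref{lemma: Limit Behavior And Monotonicity of Approximate Error}: $e_p$ is continuous, strictly increasing on $[x_p, x_{p+1}]$ with $e_p(x_p)=0$, while $e_{p+1}$ is continuous, strictly decreasing on $[x_p, x_{p+1}]$ with $e_{p+1}(x_{p+1})=0$. Their minimum is therefore continuous, vanishes at both endpoints, and attains its maximum at the unique crossing point $x'$ where $e_p(x') = e_{p+1}(x')$. Writing this equality out and using $a_p = \alpha_t\phi_{1,t}'(x_p)$ and $a_{p+1}=\alpha_t\phi_{1,t}'(x_{p+1})$, the common term $-\alpha_t\phi_{1,t}(x')$ cancels, leaving the linear equation
\[
\phi_{1,t}'(x_p)(x'-x_p) + \phi_{1,t}(x_p) = \phi_{1,t}'(x_{p+1})(x'-x_{p+1}) + \phi_{1,t}(x_{p+1}).
\]
Solving for $x'$, where the coefficient $\phi_{1,t}'(x_p)-\phi_{1,t}'(x_{p+1})$ is nonzero by strict concavity of $\phi_{1,t}$, yields precisely the stated closed form for $x'(x_{p+1})$. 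The identical argument applied to $e_q$ and $e_{q+1}$ with $\phi_{2,t}$ in place of $\phi_{1,t}$ produces the formula for $c'(c_{q+1})$, and substituting these maximizers back into the separable decomposition from Corollary~\ref{corollary: Separable Maximum Approximation Error for Partitions} gives the claimed identity $\sup e(x,c) = e_p(x') + e_q(c')$.

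The only genuinely nontrivial step is the second one, the reduction $\min_l e_l(x) = \min(e_p, e_{p+1})$ on $[x_p,x_{p+1}]$, since everything else is either already proved in Lemma~\ref{lemma: Limit Behavior And Monotonicity of Approximate Error} and Corollary~\ref{corollary: Separable Maximum Approximation Error for Partitions} or reduces to routine algebra; however this reduction follows cleanly from monotonicity of $\phi_{1,t}'$ as sketched, so no serious obstacle is expected.
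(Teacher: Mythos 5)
Your proposal is correct and follows essentially the same route as the paper's proof: reduce via Corollary~\ref{corollary: Separable Maximum Approximation Error for Partitions}, show only the two neighboring hyperplanes are active on $[x_p,x_{p+1}]$, then use the monotonicity from Lemma~\ref{lemma: Limit Behavior And Monotonicity of Approximate Error} to locate the supremum at the crossing point $e_p(x')=e_{p+1}(x')$ and solve the resulting linear equation. The only cosmetic difference is that you justify the neighbors-only reduction geometrically (tangent lines to a strictly concave function cross once, between their tangency points), whereas the paper writes out the equivalent first-order concavity inequality $e_l(x)-e_p(x)\geq \alpha_t[\phi_{1,t}'(x_l)-\phi_{1,t}'(x_p)](x-x_p)\geq 0$; the two arguments are the same computation.
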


\begin{proof} 
See~\ref{appendix: proofs in extended supporting hyperplane approximaiton}
\end{proof}

\subsection{Successive Partition Points} \label{section: Successive Partition Points for Log-Additive Separable Utility}
Given the pair $(x_p, c_q)$, the results in previous subsections are useful for determining the successive partition points, which leads to optimal number of hyperplanes.

\begin{theorem}[Successive Partition Points] \label{theorem: Successive Partition Points for general Additively Separable Utility}
	Fix partitions $\{ x_l \}_{l=0}^{M_x}$ and $\{ c_r \}_{r=0}^{M_c}$ such that $x_0 = x_{\min},  x_{M_x} = x_{\max}$, $c_0 = c_{\min}$ and $c_{M_r} = c_{\max}$. Given $x_p$ and $c_q$, where $p = 0, 1, \dots, M_x-1$ and $q = 0, 1, \dots, M_c-1$, let the error tolerance be $\varepsilon = \varepsilon_x+\varepsilon_c > 0$ for some $\varepsilon_x > 0$ and $\varepsilon_c > 0.$  Then the successive partition points $x_{p+1}$ and $c_{q+1}$ satisfy:
	$$
	x_{p+1} =  x_p + \mathcal{A}^* + \mathcal{B}^* 
	\text{ and }
c_{q+1} = c_q + \mathcal{D}^* + \mathcal{E}^*,
	$$
	where $ \mathcal{A}^*$ solves $ 	\frac{\varepsilon_x }{ \alpha_t }   = \phi_{1, t}'(x_p) \mathcal{A} - \phi_{1, t}(  \mathcal{A} + x_p ) + \phi_{1, t}(x_p) $ 
	and $\mathcal{B}^*$ solves $ 	\mathcal{B} \cdot (\phi_{1,t}'(x_p + \mathcal{A}^* + \mathcal{B}) ) =     \phi_{1,t} (x_p + \mathcal{A}^* + \mathcal{B})  -  \phi_{1,t} (x_p)  -	\mathcal{A}^* \phi_{1,t}'(x_p)  $,
	and
 $ \mathcal{D}^*$ solves $  	\frac{\varepsilon_c}{\beta_t}  = \phi_{2,t}'(c_q) \mathcal{D} +  \phi_{2,t} (c_q) -  \phi_{2, t} ( \mathcal{D} + c_q )$ 
and~$\mathcal{E}^*$ solves $ \mathcal{E} \cdot (\phi_{2, t}'(c_q + \mathcal{D}^* + \mathcal{E}) )=    \phi_{2, t} (c_q + \mathcal{D}^* + \mathcal{E})  -  \phi_{2, t} (c_q)  -	\mathcal{D}^* \phi_{2, t}'(c_q)  .  $ 
\end{theorem}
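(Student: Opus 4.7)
The plan is to reduce this two-dimensional partition problem to two independent one-dimensional problems by invoking the separability already established. By Corollary~\ref{corollary: Separable Maximum Approximation Error for Partitions}, the supremum of $e(x,c)$ over the rectangle $[x_p, x_{p+1}] \times [c_q, c_{q+1}]$ decomposes as the sum of its one-variable suprema. Allocating the tolerance as $\varepsilon = \varepsilon_x + \varepsilon_c$, the global constraint is enforced by imposing $\sup_{x \in [x_p, x_{p+1}]} \min_l e_l(x) = \varepsilon_x$ and, analogously, $\sup_{c \in [c_q, c_{q+1}]} \min_r e_r(c) = \varepsilon_c$. These two conditions determine $x_{p+1}$ and $c_{q+1}$ independently, so it suffices to work out the $x$-direction; the $c$-direction will follow verbatim with $(\phi_{2,t}, \beta_t, c_q, \mathcal{D}, \mathcal{E})$ replacing $(\phi_{1,t}, \alpha_t, x_p, \mathcal{A}, \mathcal{B})$.

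For the $x$-direction, Lemma~\ref{lemma: Maximum Approximation Errors} tells me that the maximum on $[x_p, x_{p+1}]$ is attained at a point $x' \in (x_p, x_{p+1})$ where the two local tangent errors $e_p$ and $e_{p+1}$ coincide, and the common value equals $e_p(x')$. I will parameterize this as $x' = x_p + \mathcal{A}^*$ and $x_{p+1} = x' + \mathcal{B}^* = x_p + \mathcal{A}^* + \mathcal{B}^*$, so that $\mathcal{A}^*$ locates the maximum and $\mathcal{B}^*$ extends to the next partition point. Imposing $e_p(x_p + \mathcal{A}^*) = \varepsilon_x$ and expanding via $e_p(x) = a_p(x - x_p) + \alpha_t \phi_{1,t}(x_p) - \alpha_t \phi_{1,t}(x)$ with $a_p = \alpha_t \phi_{1,t}'(x_p)$, division by $\alpha_t$ yields exactly the stated equation
\[
\frac{\varepsilon_x}{\alpha_t} = \phi_{1,t}'(x_p)\mathcal{A}^* - \phi_{1,t}(x_p + \mathcal{A}^*) + \phi_{1,t}(x_p).
\]

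The equation for $\mathcal{B}^*$ will then follow from the intersection condition $e_p(x') = e_{p+1}(x')$: substituting the definitions of $e_p$ and $e_{p+1}$, the term $-\alpha_t \phi_{1,t}(x')$ cancels, and after replacing $a_{p+1} = \alpha_t \phi_{1,t}'(x_{p+1})$, $x' - x_p = \mathcal{A}^*$, and $x' - x_{p+1} = -\mathcal{B}^*$, and dividing by $\alpha_t$, I obtain
\[
\phi_{1,t}'(x_p + \mathcal{A}^* + \mathcal{B}^*) \, \mathcal{B}^* = \phi_{1,t}(x_p + \mathcal{A}^* + \mathcal{B}^*) - \phi_{1,t}(x_p) - \mathcal{A}^* \phi_{1,t}'(x_p),
\]
matching the stated characterization of $\mathcal{B}^*$. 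The parallel substitution with $\phi_{2,t}$, $\beta_t$, $c_q$ produces the companion equations for $\mathcal{D}^*$ and $\mathcal{E}^*$.

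The main obstacle, rather than the algebraic manipulation itself, is ensuring that $\mathcal{A}^*$, $\mathcal{B}^*$, $\mathcal{D}^*$, $\mathcal{E}^*$ are actually well-defined. For $\mathcal{A}^*$, strict concavity of $\phi_{1,t}$ (from Definition~\ref{definition: additively separable utility}) makes the right-hand side of its equation a strictly increasing function of $\mathcal{A}$ that vanishes at $\mathcal{A} = 0$; thus existence and uniqueness follow as long as $\varepsilon_x/\alpha_t$ does not exceed the value attained at $x_{\max} - x_p$, a condition I would state as the admissibility assumption. A similar monotonicity argument, using that the right-hand side of the $\mathcal{B}^*$ equation is increasing in $\mathcal{B}$ once $\mathcal{A}^*$ is fixed, pins down $\mathcal{B}^*$ uniquely. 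The concavity and strict monotonic decrease of $\phi_{2,t}$ provide the analogous guarantee for $\mathcal{D}^*$ and $\mathcal{E}^*$.
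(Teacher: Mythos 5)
Your proposal is correct and follows essentially the same route as the paper: decompose via the separability corollary, allocate $\varepsilon_x$ and $\varepsilon_c$ to the two one-dimensional problems, identify $\mathcal{A}^*=x'-x_p$ and $\mathcal{B}^*=x_{p+1}-x'$ using the crossing point from Lemma~\ref{lemma: Maximum Approximation Errors}, and settle existence and uniqueness by monotonicity plus the Intermediate Value Theorem. The only minor quibble is that the relevant monotone quantity for $\mathcal{B}^*$ is the \emph{difference} of the two sides (which is strictly decreasing for $\mathcal{B}>0$ by strict concavity of $\phi_{1,t}$), not the right-hand side alone; your explicit admissibility condition on $\varepsilon_x$ is a point the paper leaves implicit.
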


\begin{proof} 
	See~\ref{appendix: proofs in extended supporting hyperplane approximaiton}.
\end{proof}

An interesting special case arises if we consider the log-additive separable utility, i.e., with $\alpha_t = \beta_t = 1$ and $\phi_{1,t} = \log(1+x)$ and $\phi_{2, t}(c) = \log (1-c)$. 
The utility, , as seen in Example~\ref{example: Illustration of Additively Separable Utilities}, is given by  $U_t((1+x)(1-c)) := \log(1+x)  + \log(1-c)$. Given the partition pair $(x^{\prime},c^{\prime})$, the following result indicates that the successive partition points can be obtained recursively in a more compact format. 

\begin{theorem}[Successive Partition Points for Log-Additive Separable Utility]\label{theorem: Successive Partition}
Fix partitions $\{ x_l \}_{l=0}^{M_x}$ and $\{ c_r \}_{r=0}^{M_c}$ such that $x_0 = x_{\min},  x_{M_x} = x_{\max}$, $c_0 = c_{\min}$ and $c_{M_r} = c_{\max}$. Given $x_p$ and $c_q$, where $p = 0, 1, \dots, M_x-1$ and $q = 0, 1, \dots, M_c-1$, let the error tolerance be $\varepsilon = \varepsilon_x+\varepsilon_c > 0$ for some $\varepsilon_x > 0$ and $\varepsilon_c > 0.$  Then the successive partition points $x_{p+1}$ and $c_{q+1}$ satisfy:
$$
x_{p+1} = (1 + \mathsf{a}_x) x_p +  \mathsf{a}_x \text{ and }
c_{q+1} = (1 -  \mathsf{d}_c) c_q +  \mathsf{d}_c
$$
where $ \mathsf{a}_x$ solves $\frac{1 + \mathsf{a}}{\mathsf{a}}\log(1 + \mathsf{a}) = \mathsf{b}_x$ with $\mathsf{b}_x$ as the solution of $\mathsf{b} - \log\mathsf{b} - 1 = \varepsilon_x$ and $\mathsf{d}_c$ solves the nonlinear equation $\frac{1 - \mathsf{d}}{\mathsf{d}}\log\left( \frac{1}{1 - \mathsf{d}} \right) = \theta_c$ with $\theta_c$ as the solution of $\theta - \log\theta-1=\varepsilon_c$.
\end{theorem}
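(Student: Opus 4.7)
The plan is to specialize Theorem~\ref{theorem: Successive Partition Points for general Additively Separable Utility} to the log-additive setting and then algebraically collapse the four implicit equations defining $\mathcal{A}^*, \mathcal{B}^*, \mathcal{D}^*, \mathcal{E}^*$ into the two compact scalar rules claimed. Concretely, I specialize $\alpha_t=\beta_t=1$, $\phi_{1,t}(x)=\log(1+x)$ with $\phi_{1,t}'(x)=1/(1+x)$, and $\phi_{2,t}(c)=\log(1-c)$ with $\phi_{2,t}'(c)=-1/(1-c)$. Every occurrence of $\phi_{i,t}$ and $\phi_{i,t}'$ then depends only on $1+x$ or $1-c$, and this homogeneity is what makes the factors $(1+x_p)$ and $(1-c_q)$ cancel out of both sides of the defining equations, producing universal relations in the dimensionless quantities $\mathsf{a}_x, \mathsf{b}_x, \mathsf{d}_c, \theta_c$.

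I would first handle the $x$-direction. Substituting the specialization into the equation for $\mathcal{A}^*$ gives $\varepsilon_x = \mathcal{A}/(1+x_p) - \log\!\bigl(1+\mathcal{A}/(1+x_p)\bigr)$, so the substitution $\mathsf{b}_x := 1 + \mathcal{A}^*/(1+x_p)$ immediately converts it to $\mathsf{b}_x - \log\mathsf{b}_x - 1 = \varepsilon_x$; since $\mathcal{A}^*>0$, we are on the branch $\mathsf{b}_x > 1$, and moreover $1+x_p+\mathcal{A}^* = \mathsf{b}_x(1+x_p)$. Next, parametrize the new partition point by $\mathsf{a}_x := (x_{p+1}-x_p)/(1+x_p)$, which automatically yields $x_{p+1} = (1+\mathsf{a}_x)x_p + \mathsf{a}_x$ and $1+x_{p+1} = (1+\mathsf{a}_x)(1+x_p)$. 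Writing $\mathcal{B}^* = \mathsf{a}_x(1+x_p) - \mathcal{A}^* = (1+\mathsf{a}_x-\mathsf{b}_x)(1+x_p)$ and substituting into the $\mathcal{B}$-equation of Theorem~\ref{theorem: Successive Partition Points for general Additively Separable Utility}, both sides acquire a common factor $(1+x_p)$ that cancels, leaving $1 - \mathsf{b}_x/(1+\mathsf{a}_x) = \log(1+\mathsf{a}_x) - \mathsf{b}_x + 1$, which rearranges to $\mathsf{b}_x\cdot\mathsf{a}_x/(1+\mathsf{a}_x) = \log(1+\mathsf{a}_x)$, i.e.\ $\frac{(1+\mathsf{a}_x)\log(1+\mathsf{a}_x)}{\mathsf{a}_x} = \mathsf{b}_x$.

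The $c$-direction mirrors this with a sign flip coming from $\phi_{2,t}'=-1/(1-c)$. Substituting into the equation for $\mathcal{D}^*$ and introducing $\theta_c := 1 - \mathcal{D}^*/(1-c_q)$ turns it into $\theta_c - \log\theta_c - 1 = \varepsilon_c$; since $\mathcal{D}^*>0$ we land on $\theta_c\in(0,1)$, and $1-(c_q+\mathcal{D}^*) = \theta_c(1-c_q)$. Setting $\mathsf{d}_c := (c_{q+1}-c_q)/(1-c_q)$ delivers the target form $c_{q+1} = (1-\mathsf{d}_c)c_q + \mathsf{d}_c$ together with $1-c_{q+1} = (1-\mathsf{d}_c)(1-c_q)$. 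Substituting $\mathcal{E}^* = (\mathsf{d}_c-1+\theta_c)(1-c_q)$ into the $\mathcal{E}$-equation, cancelling $(1-c_q)$, and simplifying collapses the equation to $\theta_c\cdot \mathsf{d}_c/(1-\mathsf{d}_c) = -\log(1-\mathsf{d}_c)$, which is exactly $\frac{1-\mathsf{d}_c}{\mathsf{d}_c}\log\!\bigl(\tfrac{1}{1-\mathsf{d}_c}\bigr) = \theta_c$.

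The main obstacle is essentially careful bookkeeping: I expect the critical difficulty to be making the cancellations in the $\mathcal{B}$- and $\mathcal{E}$-equations go through cleanly simultaneously, which requires the two reparametrizations $(\mathsf{b}_x,\mathsf{a}_x)$ and $(\theta_c,\mathsf{d}_c)$ to be chosen so that \emph{both} $1+x_p+\mathcal{A}^*$ and $1+x_{p+1}$ (respectively $1-c_q-\mathcal{D}^*$ and $1-c_{q+1}$) factor as constants times $(1+x_p)$ (resp.\ $(1-c_q)$). A subsidiary point is branch selection: the function $s\mapsto s-\log s-1$ attains its minimum $0$ at $s=1$, so both implicit equations have two roots; here the positivity of $\mathcal{A}^*$ and $\mathcal{D}^*$ forces the branches $\mathsf{b}_x>1$ and $\theta_c\in(0,1)$, respectively, which pin down the intended solutions.
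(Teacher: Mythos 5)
Your proposal is correct: I checked the substitutions and each of the four implicit equations from Theorem~\ref{theorem: Successive Partition Points for general Additively Separable Utility} does collapse as you claim (e.g., $\mathsf{b}_x = 1+\mathcal{A}^*/(1+x_p)$ turns the $\mathcal{A}$-equation into $\mathsf{b}_x-\log\mathsf{b}_x-1=\varepsilon_x$, and the $\mathcal{B}$-equation reduces to $\mathsf{b}_x\,\mathsf{a}_x/(1+\mathsf{a}_x)=\log(1+\mathsf{a}_x)$; the $c$-side works identically with the sign flip). However, your route is genuinely different from the paper's. The paper does not pass through the general successive-partition theorem at all: it works directly from Lemma~\ref{lemma: Maximum Approximation Errors}, plugs the explicit maximizer $c^{\prime}(c_{q+1})$ (resp.\ $x^{\prime}(x_{p+1})$) for the log utility into $e_q(c^{\prime})$ (resp.\ $e_p(x^{\prime})$), and observes that the resulting closed-form error is exactly $\theta-\log\theta-1$ evaluated at the single aggregated quantity $\theta_c=\frac{1-c_{q+1}}{c_{q+1}-c_q}\log\bigl(\frac{1-c_q}{1-c_{q+1}}\bigr)$, which it then rewrites in terms of $\mathsf{d}_c=(c_{q+1}-c_q)/(1-c_q)$. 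Your approach buys modularity and a consistency check --- it shows the compact log-case recursion is literally the specialization of the general two-stage $(\mathcal{A}^*,\mathcal{B}^*)$/$(\mathcal{D}^*,\mathcal{E}^*)$ characterization, and it inherits existence/uniqueness from that theorem for free; the paper's direct computation is shorter and makes the origin of the auxiliary function $s\mapsto s-\log s-1$ more transparent. Your remark on branch selection ($\mathsf{b}_x>1$, $\theta_c\in(0,1)$, pinned down by $\mathcal{A}^*>0$ and $\mathcal{D}^*>0$) is a point the paper handles only implicitly via monotonicity of $e_{p+1}$ and $e_{q+1}$, so it is a welcome addition rather than a gap.
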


\begin{proof} 
	See~\ref{appendix: proofs in extended supporting hyperplane approximaiton}.
\end{proof}

By the above process, we separate the maximum hyperplane approximation error into two error functions $e_l(x),\ l = 0,1,\dots, M_x$ and $e_r(c),\ r = 0,1,\dots, M_c$, and recursively construct hyperplanes~$h_{l,r}$. Then we obtain the following result:

\begin{lemma}[Optimal Number of Hyperplanes] \label{lemma: Optimal Number Of Hyperplanes}  
Given the maximum error tolerance constant~$\varepsilon > 0$, the corresponding optimal number of hyperplanes required is given by
$
M := M_x + M_c,
$
where $M_x$ and $M_c$ are the optimal numbers of hyperplanes for~\eqref{eq: approximation error along x-direction} and~\eqref{eq: approximation error along c-direction}, respectively.
\end{lemma}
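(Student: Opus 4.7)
The plan is to combine the additive separability of the utility $f_t$ with the error decomposition established in Theorem~\ref{theorem: Separable Maximum Approximation Error} and the recursive partitioning delivered by Theorem~\ref{theorem: Successive Partition Points for general Additively Separable Utility}. First, I would split the total budget as $\varepsilon = \varepsilon_x + \varepsilon_c$ with $\varepsilon_x,\varepsilon_c>0$. By Theorem~\ref{theorem: Separable Maximum Approximation Error}, the joint error decomposes as
\[
\sup_{x,c} e(x,c) = \sup_{x} \min_{l} e_l(x) + \sup_{c} \min_{r} e_r(c),
\]
so controlling the joint error reduces to controlling each univariate supremum separately. Theorem~\ref{theorem: Successive Partition Points for general Additively Separable Utility} then yields partitions $\{x_l\}_{l=0}^{M_x}$ and $\{c_r\}_{r=0}^{M_c}$ satisfying $\sup_x \min_l e_l(x) \leq \varepsilon_x$ and $\sup_c \min_r e_r(c) \leq \varepsilon_c$, and these $M_x$ and $M_c$ are precisely the counts generated by the recursion in that theorem.

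Next, I would exploit additive separability to argue that constructing the family of bivariate hyperplanes $h_{l,r}$ requires only $M_x + M_c$ primitive tangents rather than $M_x\cdot M_c$. Since $f_t(x,c) = \alpha_t \phi_{1,t}(x) + \beta_t \phi_{2,t}(c)$, the coefficients $a_l = \alpha_t \phi_{1,t}'(x_l)$ are independent of $r$, the coefficients $b_r = \beta_t \phi_{2,t}'(c_r)$ are independent of $l$, and the intercept $\gamma_{l,r}$ splits additively into an $x_l$-part and a $c_r$-part. Consequently every bivariate supporting hyperplane $h_{l,r}(x,c)$ is realized as the sum of a univariate tangent to $\alpha_t \phi_{1,t}$ at some $x_l$ and a univariate tangent to $\beta_t \phi_{2,t}$ at some $c_r$, so the total number of primitive tangent hyperplanes needed is $M := M_x + M_c$.

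Finally, for the minimality (``optimal'') part, I would appeal to the successive-partition construction of Theorem~\ref{theorem: Successive Partition Points for general Additively Separable Utility}, which by design saturates the per-direction tolerances $\varepsilon_x,\varepsilon_c$ at each recursion step. Combined with the monotonicity in Lemma~\ref{lemma: Limit Behavior And Monotonicity of Approximate Error} and the tightness of the representation in Lemma~\ref{lemma: Maximum Approximation Errors}, any coarser partition in the $x$-direction would yield a subinterval on which $\min_l e_l(x) > \varepsilon_x$, and similarly for $c$; by the separable decomposition this would push $\sup_{x,c} e(x,c)$ above $\varepsilon$, contradicting the tolerance.

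The main obstacle is the minimality step, because it requires more than just the per-direction recursion: one must also verify that the joint budget allocation $(\varepsilon_x,\varepsilon_c)$ with $\varepsilon_x+\varepsilon_c=\varepsilon$ cannot be traded off to reduce $M_x+M_c$ below the value produced by the recursion. I would argue this via the monotonicity of the partition count in the per-direction tolerance (inherited from strict concavity of $\phi_{1,t}$ and $\phi_{2,t}$), which ensures that shrinking one budget inflates the corresponding partition count faster than the other one can be reduced, so the separable decomposition is indeed the efficient allocation.
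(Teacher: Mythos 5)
Your proposal is correct and follows the same route as the paper: invoke the separability of the maximum approximation error (Theorem~\ref{theorem: Separable Maximum Approximation Error}) so that the partitions in $x$ and $c$ are determined independently by their respective tolerances $\varepsilon_x$ and $\varepsilon_c$ with $\varepsilon = \varepsilon_x + \varepsilon_c$, and then count additively. The paper's own proof is only two sentences and stops at exactly that assertion; you go further on two points it leaves implicit. First, your observation that additive separability makes $a_l$ independent of $r$, $b_r$ independent of $l$, and $\gamma_{l,r}$ additively split, so that every bivariate plane $h_{l,r}$ is a sum of $M_x$-many univariate tangents to $\alpha_t\phi_{1,t}$ and $M_c$-many to $\beta_t\phi_{2,t}$, is the real justification for why the count is $M_x + M_c$ rather than a product, and the paper never states it. Second, your attempt at the minimality claim (a coarser partition in either direction forces $\min_l e_l(x) > \varepsilon_x$ on some subinterval, hence $\sup_{x,c} e(x,c) > \varepsilon$ by separability) and your concern about whether the budget split $(\varepsilon_x,\varepsilon_c)$ could be traded off to lower $M_x + M_c$ address a genuine gap: the paper asserts ``optimal'' without argument, and indeed the lemma as stated fixes $\varepsilon_x$ and $\varepsilon_c$ implicitly rather than optimizing over the split, so your trade-off discussion is a reasonable caveat rather than an error on your part.
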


\begin{proof} 
See~\ref{appendix: proofs in extended supporting hyperplane approximaiton}.
\end{proof}

\section{Empirical Studies: Large-Scale Robust Portfolio Management} \label{section: Empirical Studies}
This section provides an extensive empirical study using large-scale historical price data to substantiate our theory. Throughout this section, we adopt the log-additive separable utility, which aligns with the standard ELG theory. For further reference, see \cite{MacLean_Thorp_Ziemba_2011book, Cover_Thomas_2012, rujeerapaiboon2016robust}, and \cite{hsieh2023asymptotic, hsieh2024solving}.

\emph{Data.}
We use daily adjusted closing price data for~S\&P 500 constituent stocks from~\cite{YahooFinance}, covering three years from January 1, 2021, to December 31, 2023.  The constituent stocks of the~S\&P~500 index may change over time, leading to an incomplete dataset. 
To address issues with missing values, we consider only stocks that were not replaced or added to the index during this period. 
Any remaining missing values are completed using linear interpolation, e.g., see~\cite{Newbury1981}. 
As a result, our dataset includes $477$ individual stocks and one additional risk-free asset, covering a total of $753$ trading days. 
The risk-free asset has an annualized interest rate of $r_f:= 0.02$, which serves as a reasonable approximation given the variation in U.S. Treasury yields during the~period.

\emph{The Benchmarks.}
To compare with the classical ELG portfolio, we use the log-additively separable utility with hyperplane approximation approach described in Section~\ref{section: Extended Supporting Hyperplane Approximation}. This ensures consistency in the comparison, as the optimal solution obtained from the hyperplane approach can be shown to be arbitrarily close to the optimal solution obtained via ELG when there is no ambiguity and no cost; see \cite{hsieh2024solving}.
Additionally, we consider two standard performance benchmarks: the buy-and-hold strategy on SPY, an ETF that tracks the S\&P 500 index, and the equal-weight buy-and-hold portfolio of S\&P 500 index constituents. For convenience, we shall use the shorthand \texttt{ELG} for the expected log-growth portfolio,  \texttt{HYP} for the hyperplane approximation approach with log-additively separable utility, \texttt{SPY} for the SPY ETF, and \texttt{EW} for the equal-weight buy-and-hold~portfolio. 

\emph{Simulation Details and Parameter Settings.} 
Through the following experiments, we solve Problem~\ref{problem: ELG approxiamtion} using the sliding window method, e.g., see \cite{wang2022data}. Specifically, we rebalance the portfolio quarterly, and in each rebalance, we optimize the portfolio with six-month training data that ends one day before the rebalance. 
With error tolerance constants $\varepsilon_x = 0.001$ and~$\varepsilon_c = 1 \times 10^{-5}$,
the corresponding number of supporting hyperplanes are $M_x \in \{6, 7, \dots, 11\}$ and~$M_c \in \{1, 2\}$. The number of hyperplanes varies since the data used in each rebalancing has different price volatilities.

For the backtests, we use a leverage constant of $L = 1.5$. 
Moreover, we set different cost rates to observe the effects of turnover costs on portfolio performance. Specifically, we consider $c \in \{0, 0.001, 0.005, 0.01\}$.
For nonzero cost rates, we set two constraints on the turnover cost limit: $c_{\max} \in \{c \cdot 2L, c \cdot \frac{L}{2}\} $ with $c \neq 0$.  
The choice of $2L$ corresponds to the scenario where an asset is invested with leverage, sold upon rebalancing, and a completely new asset is purchased with leverage, effectively contributing $2L$.
On the other hand, the choice of $\frac{L}{2}$ represents a more conservative scenario where only a partial turnover occurs, allowing for a more gradual portfolio adjustment.
These two choices allow us to analyze the impact of different levels of turnover constraints on the portfolio.
It is readily verified that the constraint $|K(t) - K(t-1)|^{\top} C < c_{\max} < 1$, as described in Section~\ref{subsection: Trading Constraints}, is satisfied. Specifically, we consider two turnover rate constraints: $TR(t) < 2L$ and $TR(t) < \frac{L}{2}$.
Lastly, since the turnover costs are satisfied for any $c_{\max}$, we set $c_{\max} = 0.01$ for the case of zero transaction cost rate.

\subsection{Performance Analysis}
Table~\ref{table: Performance Metrics of SPY and Equal-weight} summarizes the benchmark trading performances of the buy-and-hold strategy on both \texttt{SPY} and \texttt{EW}, with three-year returns and annual Sharpe ratio. Since transaction costs are only charged for the initial investment, the maximum drawdown for the four transaction cost rates is similar. 
From the table, it is clear that the cumulative return and the Sharpe ratio decline as the transaction cost rate increases. For a zero cost rate, the cumulative return of \texttt{SPY} is about $0.153$, slightly higher than that of \texttt{EW}, which is~$0.131$.
In the sequel, we shall compare the trading performance above with our proposed approach \texttt{HYP} and the classical \texttt{ELG} portfolio.

\begin{table}[h!]
\centering
\footnotesize
\caption{Benchmark Performance: \texttt{SPY} and \texttt{EW} }

\begin{tabular}{l c c c c   c c c c c}
	\toprule
	& \texttt{SPY} &  &  &  &  \texttt{EW} &  &  &  \\
	\midrule
	Cost Rate $c$ & 0.0 & 0.001 & 0.005 & 0.01 & 0.0 & 0.001 & 0.005 & 0.01 \\
	Cumulative Return & 0.153 & 0.152 & 0.148 & 0.142 & 0.131 &  0.130 & 0.125 & 0.120\\
	Max Drawdown & 0.245 & 0.245 & 0.245 & 0.245 & 0.209 & 0.209 & 0.209 & 0.209\\
	Annualized Sharpe Ratio& 0.295 & 0.292 & 0.284 & 0.273 & 0.252 & 0.249 & 0.241 & 0.230\\
	\bottomrule
\end{tabular}

\label{table: Performance Metrics of SPY and Equal-weight}
\end{table}

\subsubsection{Performance Analysis Without Ambiguity ($\gamma = 0$).}
We first consider the nominal case where $\gamma = 0$, i.e., the return has no ambiguity. Then, we optimize portfolios of $n = 478$ assets, comprising $477$ constituent stocks of the S\&P 500 and one risk-free asset, by solving Problem~\ref{problem: ELG approxiamtion}.
Table~\ref{table: Performance Metrics with diversification} summarizes the out-of-sample performance of portfolios when $\gamma = 0$. Since we are recomputing weights, turnover rate, and optimal value every three months, the table reports the \emph{average} invested weight in risky assets, \emph{average} optimal value for the optimal values, and \emph{average} turnover~rate.

\begin{table}[h!]
\footnotesize
\caption{Trading Performance with Diversified Holding Constraint and $\gamma = 0$}
\centering
\begin{tabular}{l c c c c c}
	\toprule[1.2pt]
	Case of  \texttt{ELG} portfolio & 2.1 & 2.2 & 2.3 & 2.4 & 2.5 \\
	\midrule
	Cost Rate $c$ & 0 & 0.001 & 0.001 & 0.005 & 0.005 \\
	Turnover Cost Limit $c_{\max}$ & 0.01 & 0.003 & 0.00075 & 0.00755 & 0.00375 \\
	Cumulative Return & 0.114 &  0.062 & 0.0599 & 0.089 &  0.089 \\
	Max Drawdown & 0.167 & 0.202 & 0.194 & 0.007 & 0.007 \\
	Annualized Sharpe Ratio & 0.230 & 0.104 & 0.096 & 1.447 & 1.441 \\
	Average Turnover Rate & 0.59 & 0.36 & 0.34 & $7.13\mathrm{e}{-3}$ & $7.34\mathrm{e}{-3}$ \\
	Average Invested Weight & 0.88 & 0.86 & 0.81 & 0.03 & 0.03 \\
	Average Optimal Value & $1.22\mathrm{e}{-3}$  & $6.92\mathrm{e}{-4}$ & $6.76\mathrm{e}{-4}$ & $3.86\mathrm{e}{-5}$ & $3.83\mathrm{e}{-5}$ \\
	Average Running Time (sec) & 6,031.13 & 8,697.77 & 6,283.13 & 4,878.52 & 5,680.08 \\
	\midrule[1.0pt]
	Case of \texttt{HYP} portfolio & 2.6 & 2.7 & 2.8 & 2.9 & 2.10 \\
	\midrule
	Cost Rate $c$ & 0 & 0.001 & 0.001 & 0.005 & 0.005 \\
	Turnover Cost Limit $c_{\max}$ & 0.01 & 0.0015 & 0.00075 & 0.0075 & 0.00375 \\
	Cumulative Return & 0.103 & 0.058 & 0.061 & 0.087 & 0.090 \\
	Max Drawdown & 0.176 & 0.206 & 0.199 & 0.008 & 0.008 \\
	Annualized Sharpe Ratio & 0.202 & 0.094 & 0.0992 & 1.2582 & 1.3621 \\
	Average Turnover Rate & 0.60 & 0.35 & 0.33 & 0.01 & 0.01 \\
	Average Invested Weight & 0.89 & 0.86 & 0.82 & 0.04 & 0.04 \\
	Average Optimal Value & $1.57\mathrm{e}{-3}$ & $1.05\mathrm{e}{-3}$ & $1.03\mathrm{e}{-3}$ & $3.8\mathrm{e}{-4}$ & $3.61\mathrm{e}{-4}$ \\
	Average Running Time (sec) & 8.11 & 4.64 & 4.36 & 7.59 & 4.16  \\
	\bottomrule
\end{tabular}

\label{table: Performance Metrics with diversification}
\end{table}


In the table, compared to the running time of \texttt{ELG} ranging from 5,000 to 9,000 seconds, the longest running time for our \texttt{HYP} approach is less than 10 seconds, significantly improving computational efficiency with the supporting hyperplane approximation. 
For example, see Case~2.2 in Table~\ref{table: Performance Metrics with diversification}, \texttt{ELG} optimization takes the longest running time in the table, which is $8,697.77$ seconds, while the running time of \texttt{HYP} is only $4.64$ seconds. Moreover, the performance of \texttt{HYP} and \texttt{ELG} is similar, as seen in the row of cumulative return and maximum drawdown. 


Additionally, Table~\ref{table: Performance Metrics with diversification} clearly shows that the trading performance is affected by transaction cost. For example, from Cases 2.1 to 2.2 in \texttt{ELG} portfolio, when the transaction cost rate rises from zero to $0.001$, the cumulative return decreases from $0.114$ to $0.062$. Similarly, in Cases~2.6 and 2.7, the cumulative return of the corresponding \texttt{HYP} portfolio also decreases from $0.103$ to $0.058$. As a result, the Sharpe ratio also decreases in these cases. 
However, if we increase the transaction cost further to~$c=0.005$, the cumulative returns for both \texttt{ELG} and \texttt{HYP} increase, as seen in Cases 2.4 and 2.9 of Table~\ref{table: Performance Metrics with diversification}. These increases in cumulative return result from a portfolio concentration on the risk-free asset, where the average weight invested in risky assets declines to~$0.03$ and $0.04$, respectively. 

As a further illustration,
Figure~\ref{fig: D_Gamma0_LTurnover_elgVShyp} depicts the account value of the four portfolios (\texttt{SPY, EW, ELG, HYP}) where the turnover cost limit is $c_{\max} := 2L$. The gray dashed vertical lines denote the day we rebalance \texttt{ELG} and \texttt{HYP} portfolios. When the transaction cost rate is relatively large, e.g.,~$c = 0.005$, we see that the invested weight concentrates on the risk-free asset, leading to almost linear account~growth.

\begin{figure}[h!]
\centering
\includegraphics[width=.7\linewidth]{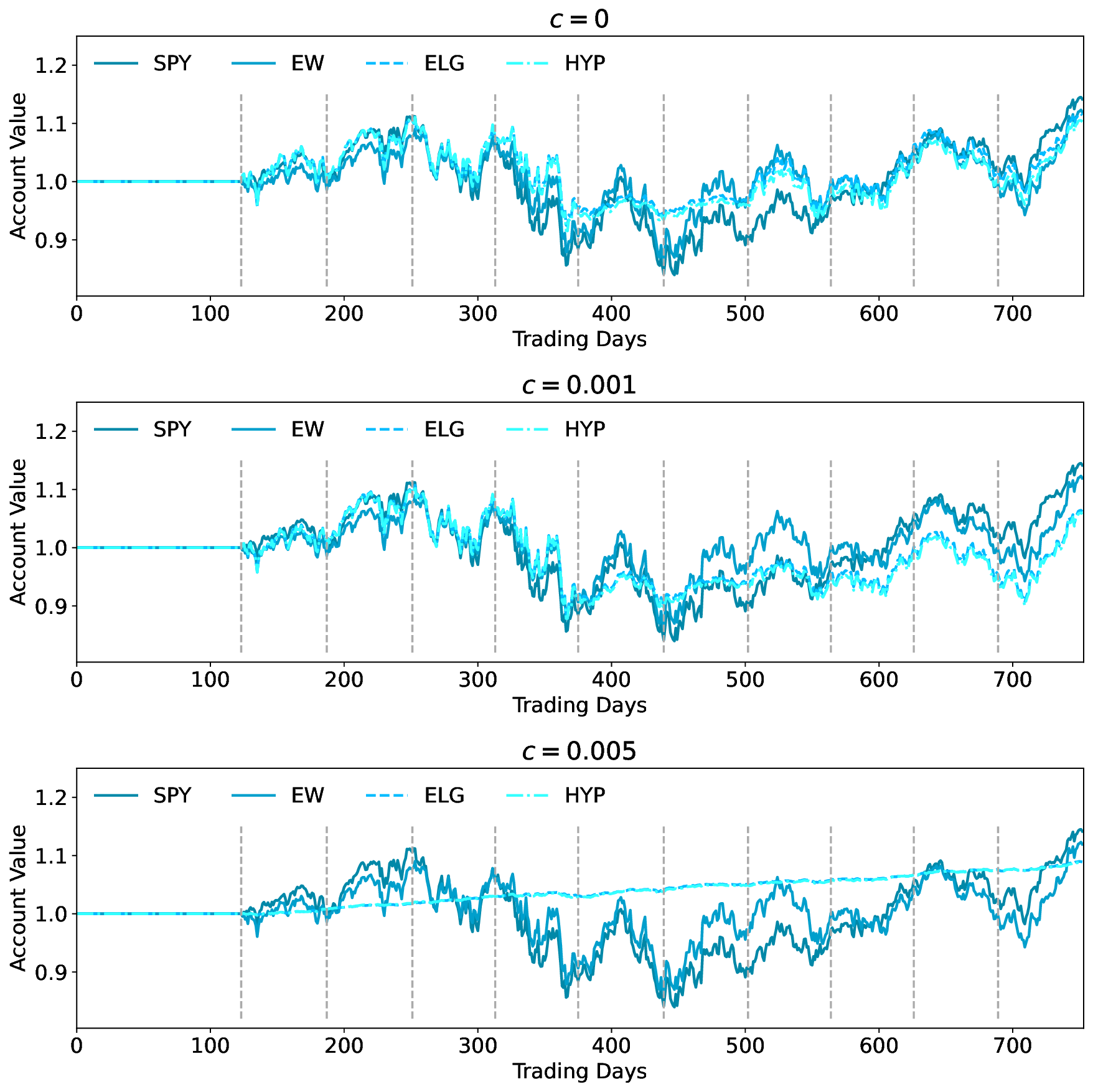}  
\caption{Trading Performance of Four Portfolios: \texttt{SPY, EW, ELG} and \texttt{HYP} with Ambiguity Constant $\gamma = 0$, Turnover Cost Limit $c_{\max} = 1.5$, and Various Cost Rates $c \in \{0, 0.001, 0.005\}$. }  
\label{fig: D_Gamma0_LTurnover_elgVShyp}
\end{figure}

\subsubsection{Performance Analysis with Ambiguity Considerations  ($\gamma > 0$).}

To examine the effect of the ambiguity constant $\gamma \in (0,1)$ on the robustness of the \texttt{HYP} portfolio, we solve Problem~\ref{problem: ELG approxiamtion} using the same data and rebalancing frequency as previously, without the diversified holding constraint. Table~\ref{table: Performance Metrics with different gamma} reports the trading performance of portfolios under different $\gamma$, with $c = 0.001$ and $c_{\max} = 0.003$. We observe that portfolios have higher returns when $\gamma$ is lower, while volatility is higher. 
For example, in Cases 3.1 and 3.6 of the table, the returns are approximately $1.939$ and~$0.175$, respectively. However, the Sharpe ratios are about $0.604$ and $0.714$. More interestingly, in a bear market regime such as the year 2022, which corresponds to the fifth to eighth gray lines in Figure~\ref{fig: DifferentGamma2024}, \texttt{HYP} portfolios optimized with higher $\gamma$ tend to allocate more weight to the risk-free asset, leading to more stable account values.

\begin{table}[h!]
\tiny
\caption{Trading Performance of \texttt{HYP} Portfolio with Different $\gamma$}
\centering
\begin{tabular}{l c c c c c c c}
\toprule[1.2pt]
Turnover Cost Limit $c_{\max} = 0.003$ & 3.1 & 3.2 & 3.3 & 3.4 & 3.5 & 3.6\\
\midrule
Ambiguity Constant $\gamma$ & 0.0 & 0.1 & 0.2 & 0.3 & 0.4 & 0.5\\
Cumulative Return & 1.939 & 0.558 & 0.551 & 0.332 & 0.275 & 0.175 \\
Max Drawdown & 0.327 & 0.460 & 0.356 & 0.303 & 0.239 & 0.092 \\
Annualized Sharpe Ratio & 0.604 & 0.586 & 0.6584 & 0.505 & 0.491 & 0.714 \\
Average Turnover Rate 
& 1.31 & 1.35 & 1.24 & 1.06 & 0.85 & 0.30\\
Average Invested Weight & 1.50 & 1.40 & 1.27 & 1.08 & 0.85 & 0.22\\
Average Optimal Value & $5.19\mathrm{e}{-3}$ & $2.74\mathrm{e}{-3}$ & $1.42\mathrm{e}{-3}$ & $5.84\mathrm{e}{-4}$ & $6.11\mathrm{e}{-5}$ & $1.26\mathrm{e}{-4}$\\
Average Running Time (sec) & 30.91 & 27.86 & 19.28 & 23.64 & 18.94 & 12.66\\

\bottomrule
\end{tabular}

\label{table: Performance Metrics with different gamma}
\end{table}

\begin{figure}[h!]
\centering
\includegraphics[width=.7\linewidth]{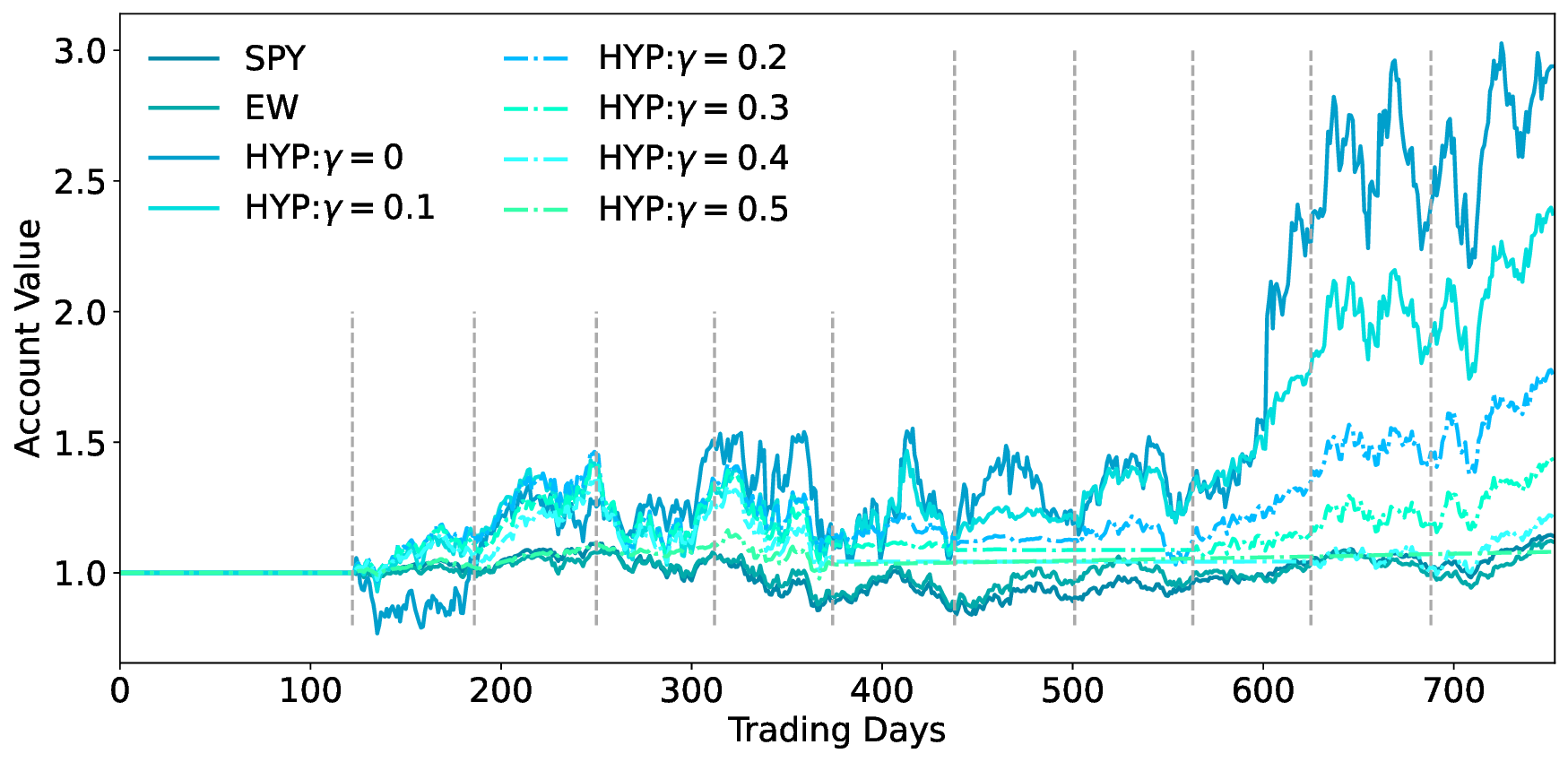}  
\caption{Account Values of \texttt{HYP} under Different $\gamma$, without Diversified Holding Constraint.}  
\label{fig: DifferentGamma2024}
\end{figure} 

Additionally, Figure~\ref{fig: ERoverC} illustrates the relationship between transaction cost rate and the expected return under different ambiguity constant $\gamma \in \{0, 0.1,\dots, 0.7\}$. Without the diversified holding constraint, the results shown in the figure are consistent with Table~\ref{table: Performance Metrics with diversification}, where returns decrease as the transaction cost rate rises. Moreover, the expected return at zero transaction cost decreases as the ambiguity constant $\gamma$ increases.

\begin{figure}[h!]
\centering
\includegraphics[width=.7\linewidth]{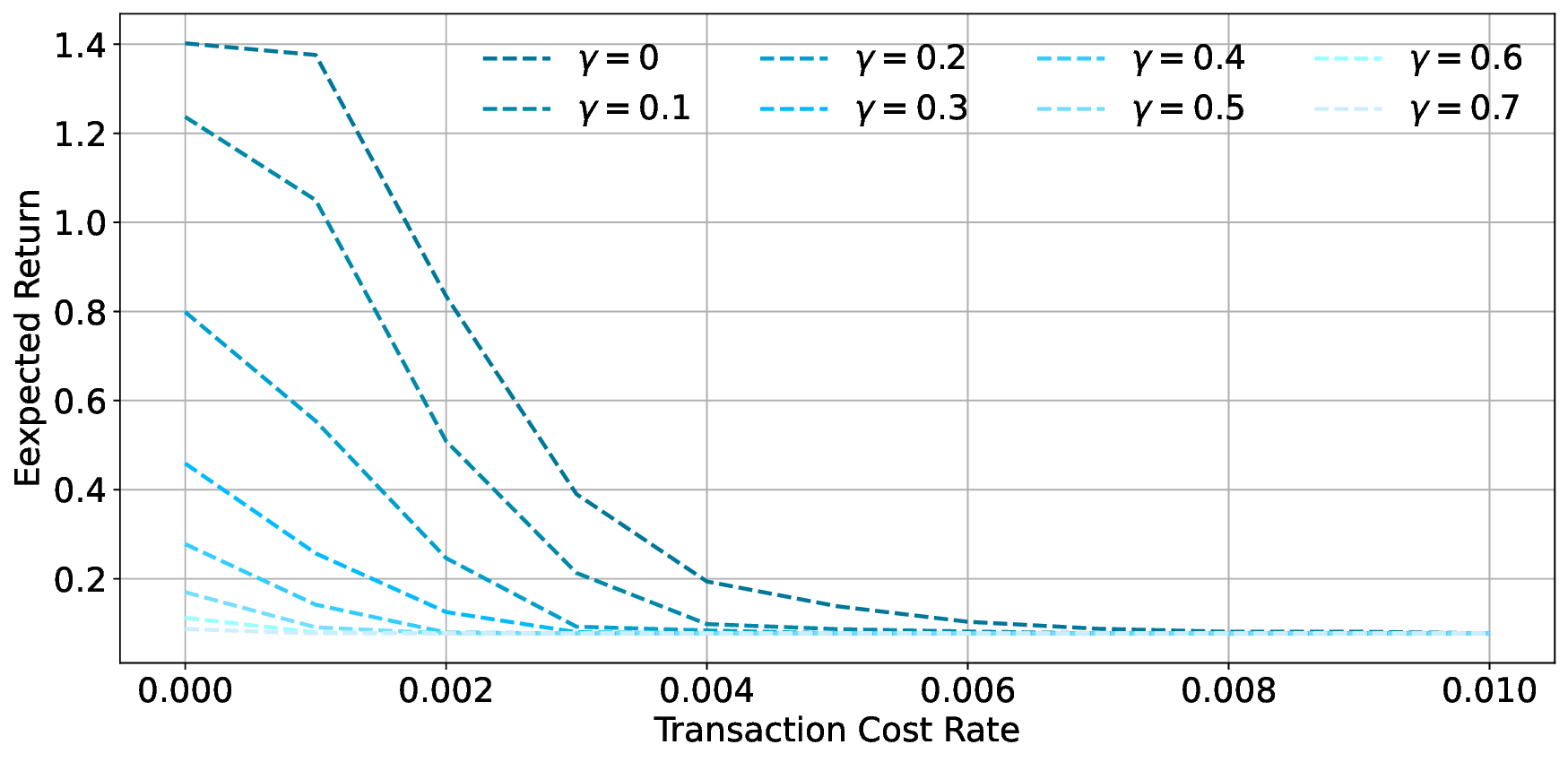}  
\caption{Expected Return of \texttt{HYP} Portfolio with Various $\gamma$.}  
\label{fig: ERoverC}
\end{figure}

\subsection{Diversification Effects via Ambiguity Constant}
As seen in the previous section, increasing the ambiguity constant may suggest a tendency towards diversification. This section further studies the relationship between diversification effects and ambiguity constants $\gamma$.
Specifically, we consider the \texttt{HYP} portfolio and impose the constraint that the sum of weights invested in risky assets only and the leverage equals $L = 1$, i.e., a cash-financed case, to observe whether portfolios tend to follow a diversified equal weight $\frac{1}{n}$ strategy. Additionally, we set the transaction cost rate $c = 0$ and rebalance the portfolio yearly with in-sample data from a previous year. 


Figure~\ref{fig: Diversity@Gamma} shows the maximum weight $\max_i K_i(t)$ in each rebalance for various $\gamma = 0, 0.1, \dots, 1$. We see that portfolios become more diversified as the ambiguity constant~$\gamma$ increases. However, an interesting finding is that the maximum weight does not converge to $\frac{1}{n}$ strategy as the ambiguity constant increases,  ending at $\max_i K_i(t) \approx 0.11$.  We envision that this is due to the polyhedral structure of the ambiguity set. However, it is beyond the scope of this paper, and we shall leave further study of this for future work.

\begin{figure}[h!]
\centering
\includegraphics[width=.7\linewidth]{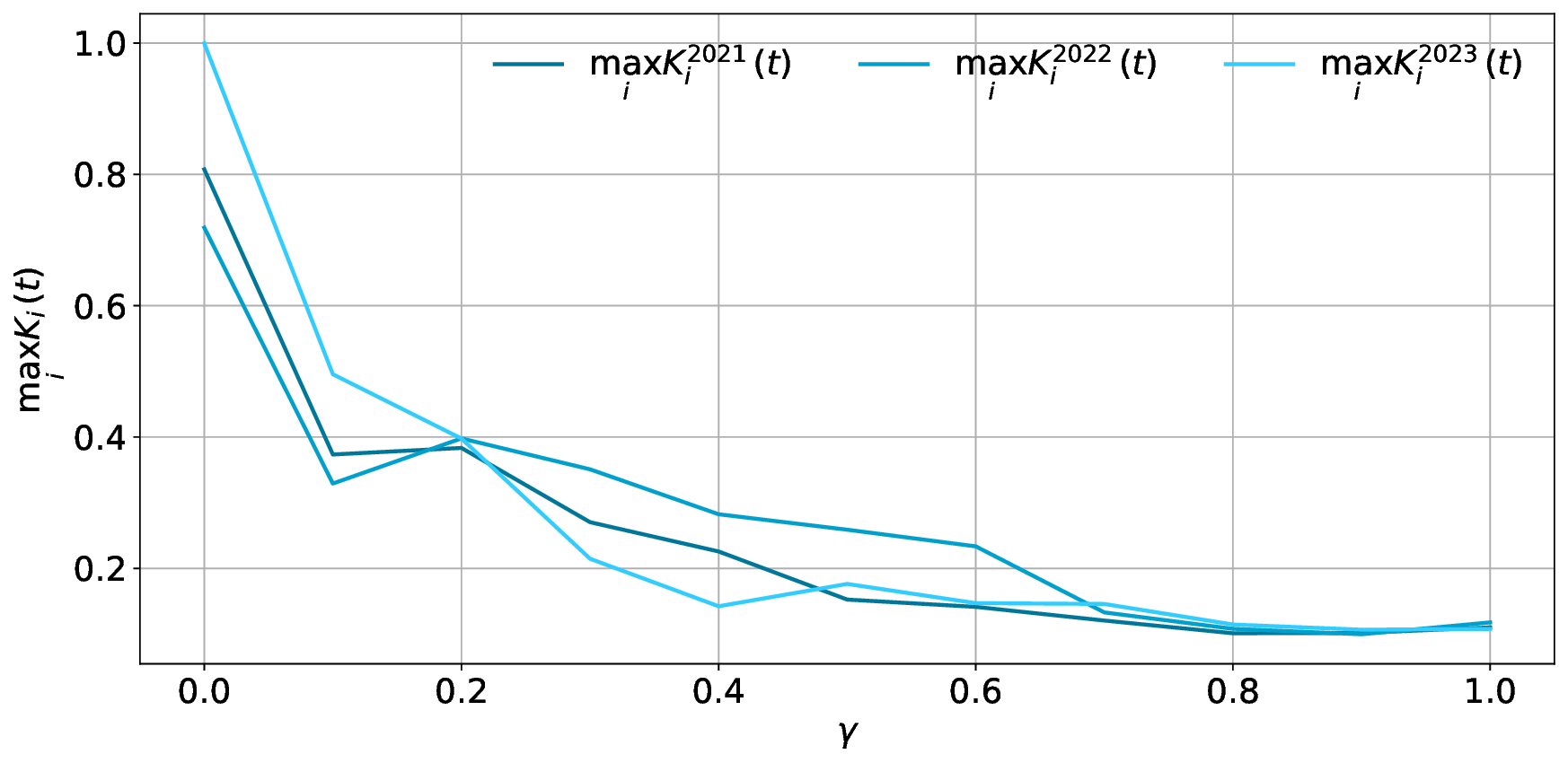}  
\caption{Diversification Effects via Ambiguity Constant in \texttt{HYP}: The Maximum Portfolio Weight $\max_i K_i^{Y}(t)$ Versus $\gamma$ where $\max_i K_i^{Y}(t)$ is the Maximum Portfolio Weight of the $Y \in \{2021, 2022, 2023\}$ Year.}  
\label{fig: Diversity@Gamma}
\end{figure}

\subsection{Optimal Number of Hyperplanes}
Here we show that the the \texttt{HYP} portfolio can be constructed using an efficient optimal number of hyperplanes for a large-scale case. 
Specifically, we calculate the estimation errors $e_l(x)$ and $e_r(c)$ in the S\&P 500 case. The corresponding parameters are set as follows: $\varepsilon := 2$, $c := 0.01$, $c_{\min} := 0$ and $c_{\max} := 0.02$. The values for  $x_{\min}$ and $x_{\max}$ are derived from historical returns used in the previous section. 
Table~\ref{table: Approximation Error for different parameters} summarizes the approximation errors for different  choices of $\varepsilon_x$ and~$\varepsilon_c$.

As seen in the table, the approximation errors~$e_l(x)$ and $e_r(c)$ are precisely controlled under $\varepsilon_x$ and $\varepsilon_c$ using the required number of hyperplanes $M_x$ and $M_c$. However, for example, if we remove any hyperplane that is neither the first one nor the last, the maximum approximation error for~$e_l(x)$ exceeds~$\varepsilon_x$.  Consequently, the total approximation error $e(x, c)$ exceeds the tolerance constant~$\varepsilon$.   A similar observation is made for $\varepsilon_c$.

\begin{table}[h!]
\centering
\footnotesize
\caption{Optimal Number of Hyperplanes: A Large-Scale Revisit}

\begin{tabular}{c c c c c c}
	\toprule
	$\varepsilon_x$ & $\varepsilon_c$ & \# of hyperplanes & $\displaystyle{\sup_x \min_l} \ e_l(x)$ & $\displaystyle{\sup_c \min_r} \ e_r(c)$ &  Error Violations\\
	\midrule
	$1\mathrm{e}{-5}$ & $1\mathrm{e}{-5}$ & $(M_x,\ M_c)$ & $1\mathrm{e}{-5}$ & $1\mathrm{e}{-5}$ & No \\
	$1\mathrm{e}{-5}$ & $1\mathrm{e}{-5}$ & $(M_x-1,\ M_c)$ & $4\mathrm{e}{-5}$ & $1\mathrm{e}{-5}$ & Yes \\
	$1\mathrm{e}{-5}$ & $1\mathrm{e}{-5}$ & $(M_x,\ M_c-1)$ & $1\mathrm{e}{-5}$ & $4\mathrm{e}{-5}$ & Yes \\
	$1\mathrm{e}{-5}$ & $1\mathrm{e}{-5}$ & $(M_x-1,\ M_c-1)$ & $4 \mathrm{e}{-5}$ & $4\mathrm{e}{-5}$ & Yes \\ 
	\midrule[1.0pt]
	$1.5\mathrm{e}{-5}$ & $5\mathrm{e}{-6}$ & $(M_x,\ M_c)$ & $1.5\mathrm{e}{-5}$ & $5\mathrm{e}{-6}$ & No \\
	$1.5\mathrm{e}{-5}$ & $5\mathrm{e}{-6}$ & $(M_x-1,\ M_c)$ & $6\mathrm{e}{-5}$ & $5\mathrm{e}{-6}$ & Yes \\
	$1.5\mathrm{e}{-5}$ & $5\mathrm{e}{-6}$ & $(M_x,\ M_c-1)$ & $1.5\mathrm{e}{-5}$ & $2\mathrm{e}{-5}$ & Yes \\
	$1.5\mathrm{e}{-5}$ & $5 \mathrm{e}{-6}$ & $(M_x-1,\ M_c-1)$ & $6\mathrm{e}{-5}$ & $2\mathrm{e}{-5}$ & Yes \\ 
	\midrule[1.0pt]
	$8\mathrm{e}{-6}$ & $1.2\mathrm{e}{-5}$ & $(M_x,\ M_c)$ & $0.8\mathrm{e}{-5}$ & $1.2\mathrm{e}{-5}$ & No \\
	$8\mathrm{e}{-6}$ & $1.2\mathrm{e}{-5}$ & $(M_x-1,\ M_c)$ & $3.2\mathrm{e}{-5}$ & $1.2\mathrm{e}{-5}$ & Yes \\
	$8\mathrm{e}{-6}$ & $1.2 \mathrm{e}{-5}$ & $(M_x,\ M_c-1)$ & $0.8\mathrm{e}{-5}$ & $4.8\mathrm{e}{-5}$ & Yes \\
	$8\mathrm{e}{-6}$ & $1.2\mathrm{e}{-5}$ & $(M_x-1,\ M_c-1)$ & $3.2\mathrm{e}{-5}$ & $4.8\mathrm{e}{-5}$ & Yes \\
	
	\bottomrule
\end{tabular}

\label{table: Approximation Error for different parameters}
\end{table}

Interestingly, regardless of the tolerance constants $\varepsilon_x$ and $\varepsilon_c$ used, the maximum approximation error after removing one hyperplane is approximately a constant multiple of the maximum approximation error using all required hyperplanes. We now sketch the argument for such a condition~below.

For $p = 1, 2, \dots, M_x-1$, after removing the $p$th hyperplane, the maximum approximation error becomes $e_{p-1}(x^{\prime \prime})$, where $x^{\prime \prime}$ is the point such that $e_{p-1}(x^{\prime \prime}) = e_{p+1}(x^{\prime \prime})$. 
By Lemma~\ref{lemma: Maximum Approximation Errors} with the log-additively separable utility, we have
\begin{align}\label{eq: x_{p+2}}
x^{\prime \prime} = x^{\prime \prime}( x_{p+1} ):= \frac{ \log \left( \frac{1+ x_{p-1}}{1 + x_{p+1}} \right) + a_{p+1} x_{p+1} - a_{p-1} x_{p-1} }{a_{p+1} - a_{p-1}}.
\end{align}
Then, by Theorem~\ref{theorem: Successive Partition}, we recursively get 
$
x_{p+1} = (1 + \mathsf{a}_x)^{2} x_{p-1}+ (1 +  \mathsf{a}_x)  \mathsf{a}_x +  \mathsf{a}_x, 
$ 
with $ \mathsf{a}_x$ is defined in Theorem~\ref{theorem: Successive Partition}.
Substituting~\eqref{eq: x_{p+2}} into $e_{p+1}(x^{\prime \prime})$ and using the fact that $e_{p-1}(x^{\prime \prime}) = e_{p+1}(x^{\prime \prime})$, a lengthy but straightforward calculation leads to  that the relation between~$e_{p-1}(x^{\prime})$ and $e_{p+1}(x^{\prime \prime})$
\begin{align*}
\frac{e_{p+1}(x^{\prime \prime})}{e_{p-1}(x^{\prime})}  
&=\frac{e_{p-1}(x^{\prime \prime})}{e_{p-1}(x^{\prime})}
= \frac{\frac{2}{ \mathsf{a}_x + 2} u - \log \left( \frac{2}{ \mathsf{a}_x + 2} u \right) - 1}{\mu- \log \mu - 1},
\end{align*}
where $u = \frac{1 }{ \mathsf{a}_x} \log(1+  \mathsf{a}_x)$ and $\mu = \frac{(1 + \mathsf{a}_x )^2 }{\mathsf{a}_x} \log (1 + \mathsf{a}_x)$. Note that $ \mathsf{a}_x$ is a constant; hence, 
for $p = 1,2,\dots, M_x-1$, it is readily verified  that $\frac{e_{p+1}(x^{\prime \prime})}{e_{p-1}(x^{\prime})}$ is a positive constant. Similarly, for $q = 1,2,\dots, M_c-1$, an almost identical argument shows that~$\frac{e_{q+1}( c^{\prime \prime})}{e_{q-1}( c^{\prime})}$ is also a positive constant.

Therefore, by induction, we infer that for optimal numbers of hyperplanes~$M = M_x + M_c$, after removing the prior $v$ hyperplanes on $x$ and removing $w$ hyperplanes on $c$, where the removed hyperplanes are indexed as $l = 1, 2,\dots, M_x-1$ and $r = 1, 2,\dots, M_c-1$, the maximum approximation error becomes the linear combination of the individual approximation errors, i.e., $\kappa \varepsilon_x + \eta \varepsilon_c$ for some positive constants $\kappa$ and $\eta$.

\section{Concluding Remarks and Future Work}
In this paper, we presented an innovative approach to addressing the computational challenges inherent in large-scale robust portfolio optimization.
Specifically, we extended the supporting hyperplane approximation method to account for a general class of additively separable utilities and to incorporate market friction arising from turnover transaction costs. We developed a robust and efficient technique for solving  a class of distributionally robust portfolio problems using hyperplanes of return rate and transaction cost rate, which significantly generalize the work in \cite{hsieh2024solving}. Our approach is particularly effective for managing large asset sets and incorporates practical considerations such as portfolio rebalancing costs. We then applied this method to large-scale portfolio optimization using the constituent stocks of the S\&P 500 and a risk-free asset.

We showed that our extended hyperplane approximation method can achieve performance arbitrarily close to that of the original log-optimal portfolio while significantly reducing computational time. Specifically, our empirical studies showed that the required running times decreased from several thousand seconds to just a few seconds even when the turnover costs and sliding window implementation are involved.  
Furthermore, even without diversified holding constraints, incorporating the polyhedral ambiguity set of return distribution enables robust portfolio optimization. Setting the turnover cost limit also facilitates portfolio diversification.

In summary, our proposed method offers a robust, efficient, and scalable solution to large-scale robust portfolio optimization, addressing both theoretical and practical challenges. Future research may explore further refinements to the supporting hyperplane approximation and extend our approach to other types of ambiguity sets and utility functions. Additionally, incorporating temporal dependencies and dynamic correlations in the return distributions could provide a more comprehensive framework for portfolio optimization under uncertainty. More detailed directions are listed below.

\emph{Future Work.} 
It would be interesting to generalize the framework to a robust mixed-integer program involving cardinality or specific long/short constraints with various cost models. We envision that a modified hyperplane approximation approach can be developed while maintaining computational efficiency.

Additionally, considering the noisy nature of financial data, including missing values, outliers, and noise trader signals, another interesting direction would be to ensure that the considered ambiguity set actually covers the true return distribution; see also \cite{farokhi2023distributionally} for preliminary research in this direction. 
Lastly, the returns model of the paper was treated as an i.i.d. random sample from an unknown but ambiguous distribution. However, this might not fully capture the reality of financial markets, where returns often exhibit temporal dependencies and volatility clustering. Future work could explore relaxing this assumption.

 \medskip
\textbf{Acknowledgment.}
{
	This work was  partly supported by the National Science and Technology Council (NSTC), Taiwan, under grants NSTC112-2813-C-007-002-H and NSTC112-2221-E-007-078-.
}


\bibliographystyle{model5-names}
\bibliography{refs}


\appendix

\section{Technical Proofs}


\subsection{Proofs in Section~\ref{section: Preliminaries}} \label{appendix: proofs in preliminaries}

\begin{proof}[Proof of Lemma~\ref{lemma: Weak Survivability Condition}] 
Since $V(0) > 0$, we consider $t > 0$.
Assume $V(t - 1) \geq 0$. Then
$
\sum_{i=1}^{n} K_i^{+}(t) |\min\{0, x_{i, \min}\}|- \sum_{i=1}^{n} K_i^{-}(t) \max\{0, x_{i, \max}\} \leq 1
$
is equivalent to 
$
\sum_{i=1}^{n} K_i^{+}(t) \min\{0, x_{i, \min}\} + \sum_{i=1}^{n} K_i^{-}(t) \max\{0, x_{i, \max}\} \geq -1.
$
Moreover, since $-1<x_{i, \min} \leq x_i(t) \leq x_{i, \max}$, we have that
{\small	\begin{align*}
	\sum_{i=1}^{n} K_i^{+}(t) x_i(t) + \sum_{i=1}^{n} K_i^{-}(t) x_i(t)  
	& \geq \sum_{i=1}^{n} K_i^{+}(t) \min\{0, x_{i, \min}\}+ \sum_{i=1}^{n} K_i^{-}(t) \max\{0, x_{i, \max}\}\\ & \geq -1.
\end{align*}
}	Then $K(t)^\top X(t) \geq -1$ and $1 + K(t)^\top X(t)\geq 0$. Moreover, by the assumed hypothesis, we have
$ 1 - |K(t) - K(t-1))|^\top C(t) \geq 0 $, Therefore, the account value at $t$ is
$
V(t) = (1 + K(t)^\top  X(t))(1 - |K(t) - K(t-1))|^\top C(t)) V(t-1) \geq 0.
$
By induction, the required survival constraints $V(t)\geq0$ hold.
\end{proof}


\begin{proof}[Proof of Theorem~\ref{theorem: An Equivalent Distributional Robust Optimization Problem}]
By Lemma~\ref{lemma: data-driven expression of the running ELG}, we have
\begin{align*}
	J_p(t; K(t), K(t-1)) 
	&= \sum_{j=1}^m p_j  U_t \left( ( 1 + K(t)^\top x^j)(1 - | K(t) - K(t-1))|^\top C(t)) \right).
\end{align*}
Define $
q(K(t)) = [q(K(t))_1\;  \cdots \; q(K(t))_m]^\top
$
with the $j$th component satisfying~\eqref{eq: q_K_t_j}.
Consider the Lagrangian
$$
\mathcal{L}(\nu, \lambda, p) = p^\top q( K(t) ) +\nu^\top (A_0p - d_0) + \lambda^\top(A_1 p - d_1)+ \mathbbm{1}_{S_m}(p)
$$
where the Lagrange multipliers are $\nu \in \mathbb{R}^{m_0}
$, $\lambda \in \mathbb{R}^{m_1}$ with $\lambda_j \geq 0$,  
and the indicator function $\mathbbm{1}_{S_m}$ represents the probability simplex condition that for $p \in \mathbb{R}^m_+$, 
$$
	\mathbbm{1}_{S_m}( p ) :=
	\begin{cases}
		1,  & p \in S_m \\
		0, & {\rm otherwise}.
	\end{cases}
$$
Then, the Lagrangian dual function is
\begin{align*}
	h( \nu, \lambda )
	&= \inf_{p \in S_m} \mathcal{L}(\nu, \lambda, p) \\
	&= \min_{p\in S_m}p^\top (q(K(t)) +A_0^\top \nu + A_1^\top \lambda) - \nu^\top d_0 - \lambda^\top d_1 \\
	&= \min_{p \in S_m} \sum_{ j=1 }^m p_j(q(K(t)) + A_0^\top\nu + A_1^\top \lambda)_j - \nu^\top d_0 - \lambda^\top d_1\\
	&\geq \min_{j} (q(K(t)) + A_0^\top \nu + A_1^\top \lambda)_j - \nu^\top d_0 - \lambda^\top d_1
\end{align*}
where the last inequality holds since 
$$ 
(q(K(t)) + A_0^\top\nu + A_1^\top \lambda)_j \geq \mathop{\min}_{j} ( q(K(t)) + A_0^\top\nu + A_1^\top \lambda)_j.
$$ 
Hence, the dual problem of $\mathop{\inf}_{p \in \mathcal{P}} g_p(t)$ is
\begin{align*}
	&\max_{\nu, \lambda} \min_{j} \; (q(K(t)) + A_0^\top\nu + A_1^\top \lambda)_j - \nu^\top d_0 - \lambda^\top d_1 \\
	&\text{s.t.}\; \lambda \succeq 0  
\end{align*}
which has the same optimal value as $\mathop{\inf}_{p \in \mathcal{P}} g_p(t)$ since the strong duality holds by Slater's condition that $p \in \mathcal{P}$ and $A_1p = d_1$ are affine. Then, the distributional robust log-optimal portfolio problem can be written~as
\begin{align*}
	&\mathop{\max}_{K(t), \nu, \lambda} \mathop{\min}_{j} ( q(K(t) ) + A_0^\top \nu+ A_1^\top \lambda)_j - \nu^\top d_0 - \lambda^\top d_1 \\
	& \text{s.t.} \; K(t) \in \mathcal{K}, \lambda \succeq 0. \qedhere
\end{align*} 
\end{proof}

\subsection{Proofs in Section~\ref{section: Extended Supporting Hyperplane Approximation}}  \label{appendix: proofs in extended supporting hyperplane approximaiton}

\medskip
\begin{proof}[Proof of Lemma~\ref{lemma: Limit Behavior And Monotonicity of Approximate Error}]
To prove part~$(i)$, we observe that for any $l = 0, 1, \dots, M_x$ and $x \in [x_{\min}, x_{\max}]$, the approximation error $e_l(x) = a_l(x - x_l)+ \alpha_t \cdot \phi_{1,t}(x_l)   - \alpha_t \cdot \phi_{1,t}(x)$. By taking derivative of $e_l(x)$, we obtain  
\begin{align}
	\frac{d }{d x} e_l(x)
 = a_l - \alpha_t  \phi_{1,t}'(x) 
	& = \alpha_t  \phi_{1,t}'(x_l) - \alpha_t  \phi_{1,t}'(x)  \label{eq: F_t eq}
\end{align}
Given that $\phi_{1,t}$ is strictly concave and strictly increasing, $\phi_{1,t}'(x) > 0$ and $\phi_{1,t}'(x)$ is strictly decreasing. Therefore, for $x \in (x_l, x_{\max}]$, we have 
$
\phi_{1,t}'(x) < \phi_{1,t}'(x_l).
$
Hence, \eqref{eq: F_t eq} becomes
$
\frac{d }{d x} e_l(x) >  \alpha_t  \phi_{1,t}'(x) - \alpha_t  \phi_{1,t}'(x) = 0.
$

On the other hand, for $x \in [x_{\min}, x_l)$, given that $\phi_{1,t}$ is strictly concave and strictly increasing, $\phi_{1,t}'(x) > 0$ and $\phi_{1,t}'(x)$ is strictly decreasing. Therefore, for $x \in [x_{\min}, x_l)$, we have 
$
\phi_{1,t}'(x) > \phi_{1,t}'(x_l).
$
Hence, \eqref{eq: F_t eq} becomes
$
\frac{d }{d x} e_l(x) < \alpha_t \cdot \phi_{1,t}'(x) - \alpha_t \cdot \phi_{1,t}'(x) =0.
$
Lastly, as \( x \to x_l \), we have
\( \phi_{1,t}(x) \to \phi_{1,t}(x_l) \). 
Therefore,
\[
\lim_{x \to x_l} e_l(x) = a_l (x_l - x_l) + \alpha_t \cdot \phi_{1,t}(x_l) - \alpha_t \cdot \phi_{1,t}(x_l) = 0.
\]

To prove part~$(ii)$, an almost identical proof as part~$(i)$ would work. Specifically, for any $r=0,1,\dots,M_c$ and $c \in [c_{\min}, c_{\max}]$, recall that $e_r(c) := b_r (c - c_r) + \beta_t \cdot \phi_{2,t}(c_r)  - \beta_t \cdot \phi_{2,t}(c)$, the  derivative of  $e_r(c)$ is given by
\begin{align}
	\frac{d }{d c} e_r(c) 
	= b_r - \beta_t  \phi_{2,t}'(c)  
	&= \beta_t  \phi_{2,t}'(c_r) - \beta_t  \phi_{2,t}'(c)  \label{eq: G_t eq}
\end{align}
Given that $\phi_{2,t}$ is strictly concave and strictly decreasing, $\phi_{2,t}'(c)< 0$ and $\phi_{2,t}'$ is strictly decreasing. Therefore, for $c \in (c_r, c_{\max}]$, we have
$
\phi_{2,t}'(c_r) > \phi_{2,t}'(c).
$
Hence, \eqref{eq: G_t eq} becomes
$
	\frac{d }{d c} e_r(c) 
	> \beta_t \cdot \phi_{2,t}'(c) - \beta_t \cdot \phi_{2,t}'(c) = 0.
$
On the other hand, for $c \in [c_{\min}, c_r)$, given that $\phi_{2, t}$ is convex and strictly increasing, $\phi_{2,t}'(x) > 0$ and $\phi_{2,t}'(x)$ is increasing. Therefore, for $c \in [c_{\min}, c_r)$, we have 
$
\phi_{2,t}'(c) > \phi_{2,t}'(c_r).
$
Hence, \eqref{eq: G_t eq} becomes
\begin{align}
	\frac{d }{d c} e_r(c) 
	&<   \beta_t \cdot \phi_{2,t}'(c) - \beta_t \cdot \phi_{2,t}'(c) = 0.
\end{align}
Lastly, as \( c \to c_r \), we have
\( \phi_{2, t}(c) \to \phi_{2,t}(c_r) \). 
Therefore,
\[
\lim_{c \to c_r} e_r(c) = b_r (c_r - c_r) + \beta_t \cdot \phi_{2, t}(c_r) - \beta_t \cdot \phi_{2, t}(c_r) = 0. \qedhere
\]
\end{proof}

\medskip
\begin{proof}[Proof of Theorem~\ref{theorem: Separable Maximum Approximation Error}]  
We begin by observing that
{\small	\begin{align}
	&\sup_{x,c} \left| f(x, c) - \min_{l, r}  h_{l, r}(x, c)  \right| \nonumber \\
	&\quad = \sup_{x, c} \left| U\left( ( 1 + x )(1 -c) \right) - \min_{l, r} \left\{ a_l x + b_r c + \gamma_{l, r} \right\} \right| \nonumber \\
	&\quad = \sup_{x,c} \left| U\left( ( 1 + x )(1 -c) \right) - \min_{l, r} \left\{ a_l x + b_r c + U((1+x_l) (1- c_r)) - a_l x_l - b_r c_r \right\} \right| \nonumber \\
	&\quad = \sup_{x,c} \left| \alpha_t  \phi_{1,t}(x) + \beta_t  \phi_{2,t}(c) - \min_{l, r} \left\{ a_l x + b_r c + \alpha_t  \phi_{1,t}(x_l) + \beta_t   \phi_{2,t}(c_r) - a_l x_l - b_r c_r \right\} \right| \nonumber.
\end{align}
}where the last equality holds by the additive separability that $U((1+x)(1-c)) = \alpha_t \cdot \phi_{1,t}(x) + \beta_t \cdot \phi_{2,t}(c)$, we have
{\small	\begin{align}
	&\sup_{x,c} \left| f(x, c) - \min_{l, r}  h_{l, r}(x, c)  \right| \nonumber \\
	&\quad = \sup_{x,c} \left| \alpha_t  \phi_{1,t}(x) + \beta_t  \phi_{2,t}(c)  - \min_{l} \{ a_l x + \alpha_t  \phi_{1,t}(x_l)  -a_lx_l \} - \min_{r} \{b_r c + \beta_t  \phi_{2,t}(c_r) - b_rc_r\} \right| \nonumber \\
	&\quad = \sup_{x,c} \left|  \min_{l} \left\{ a_l (x - x_l) + \alpha_t  \phi_{1,t}(x_l) - \alpha_t  \phi_{1,t}(x)  \right\} + \min_{r} \left\{ b_r (c - c_r) + \beta_t  \phi_{2,t}(c_r)  - \beta_t  \phi_{2,t}(c) \right\} \right| \nonumber \\
	&\quad = \sup_{x,c} \left|  \min_l e_l(x) + \min_r e_r(c) \right|, \label{eq: approximation error --intermediate step}
\end{align}		
}	where $e_l(x)$ and $e_r(c)$ are defined in~\eqref{eq: approximation error along x-direction} and~\eqref{eq: approximation error along c-direction}.

By Lemma~\ref{lemma: Limit Behavior And Monotonicity of Approximate Error}, since $e_l(x)$ is strictly decreasing in $[x_{\min}, x_l)$ and $e_l(x)$ is strictly increasing in~$(x_l, x_{\max}]$, and by the fact that $e_l(x)=0$ at $x_l$, we obtain that $e_l(x) \geq 0$ for $l=0,1,\dots,M_x$. In the same way, we have $e_r(c) \geq 0$ for $r = 0, 1, \dots, M_c$. 
This implies that 
$
\min_l e_l(x) \geq 0 \text{ and }  \min_r e_r(c) \geq 0.
$
Therefore, Equation~\eqref{eq: approximation error --intermediate step} becomes
\begin{align*}
	\sup_{x,c} \left|  \min_l e_l(x) + \min_r e_r(c) \right|
	& = \sup_{x,c} \left\{ \min_l e_l(x) + \min_r e_r(c) \right\}\\
	& = \sup_{x} \min_{l} e_l(x) + \sup_{c} \min_{r} e_r(c)
\end{align*}
Hence, the proof is complete.
\end{proof}

\begin{proof}[Proof of Corollary~\ref{corollary: Separable Maximum Approximation Error for Partitions}]
By the proof of Theorem~\ref{theorem: Separable Maximum Approximation Error}, since $e_l(x) \geq 0$ for $x \in [x_p, x_{p+1}] \subseteq [x_{\min}, x_{\max}]$, where $p=0,1,\dots,M_x-1$, and  $e_r(c) \geq 0$ for $c \in [c_q, c_{q+1}] \subseteq [c_{\min}, c_{\max}]$, where $q = 0, 1,\dots, M_c-1$, we have
\begin{align*}
	\sup_{ \substack{ x \in [x_p, x_{p+1}]\\ c \in [c_q, c_{q+1}]} } \left| f(x, c) - \min_{l, r} h_{l, r}(x, c)  \right| 
	& =   \sup_{ \substack{ x \in [x_p, x_{p+1}] \\ c \in [c_q, c_{q+1}]}} \left| \min_l e_l(x) + \min_r e_r(c)   \right| \\
	& = \sup_{x \in [x_p, x_{p+1}]} \min_{l} e_l(x) + \sup_{c \in [c_q, c_{q+1}]} \min_{r} e_r(c),
\end{align*}
which is desired.
\end{proof}

\begin{proof}[Proof of Lemma~\ref{lemma: Maximum Approximation Errors}] 
We begin by considering the partition $\{x_l\}_{l=0}^{M_x}$ such that $x_0 = x_{\min}$ and $x_{M_x} = x_{\max}$. 
Fix $p \in \{0, 1, \dots, M_x-1\}$.
According to Corollary~\ref{corollary: Separable Maximum Approximation Error for Partitions}, the maximum approximation error is separable; i.e.,
\[
\sup_{ \substack{ x \in [x_p, x_{p+1}],\\ c \in [c_q, c_{q+1}]} } | f(x, c) - h_{l, r}(x, c) | = \sup_{x \in [x_p, x_{p+1}]} \min_{l} e_l(x) + \sup_{c \in [c_q, c_{q+1}]} \min_{r} e_r(c).
\]
By part~$(i)$ of Lemma~\ref{lemma: Limit Behavior And Monotonicity of Approximate Error}, 
it follows that the error $e_p(x)$ is strictly increasing in $(x_p, x_{\max}]$ and~$e_{p+1}(x)$ is strictly decreasing in $[x_{\min}, x_{p+1})$. Moreover, since $\{x_l\}$ are partition points, we have~$x_{p+1} > x_{p}$. Therefore, it implies that there exists $x^{\prime} \in (x_p, x_{p+1})$ such that $e_p(x^{\prime}) = e_{p+1}(x^{\prime})$. 
Then we now show that for such $x^{\prime}$, 
$$
\sup_{x \in [x_p, x_{p+1}]} \min_{l=0,1,\dots,M_x} e_l(x) = e_p(x^{\prime}) = e_{p+1}(x^{\prime}).
$$

Again, since $e_p(x)$ is strictly increasing in $(x_p, x_{p+1}] \subseteq (x_p, x_{\max}]$ and $e_{p+1}(x)$ is strictly decreasing in $[x_p, x_{p+1}) \subseteq [x_{\min}, x_{p+1})$, and with the fact that $e_p(x_p)=0$ and $e_{p+1}(x_{p+1})=0$, we obtain two cases: 

\textit{Case 1.} For  $x \in [x_p, x^{\prime}]$, we have 
\begin{align} \label{ineq: e_p leq e_{p+1}}
	e_p(x) \leq e_{p+1}(x).
\end{align}

\textit{Case 2.} For $x \in (x^{\prime}, x_{p+1}]$, we have 
\begin{align} \label{ineq: e_p > e_{p+1}}
	e_p(x) > e_{p+1}(x).
\end{align}

Note that for $l \leq p$, the difference $e_l(x) - e_p(x) \geq 0$ for  $x \in [x_p, x_{p+1}]$. To see this, we note that
\begin{align}
	e_l(x) - e_p(x) 
	&= 		a_l (x - x_l) + \alpha_t \cdot \phi_{1,t}(x_l) - \alpha_t \cdot \phi_{1,t}(x) \notag \\
	& \qquad - [ a_p (x - x_p) + \alpha_t \cdot \phi_{1,t}(x_p) - \alpha_t \cdot \phi_{1,t}(x)  ] \notag \\
	& = \alpha_t [\phi_{1,t}'(x_l)(x - x_l) -   \phi_{1,t}'(x_p)(x - x_p) +   \phi_{1,t}(x_l)  -   \phi_{1,t}(x_p) ].
\end{align}
Note that $\phi_{1,t}$ is strictly concave, $-\phi_{1,t}$ is strictly convex. Hence, it has a first-order lower bound, see \cite{beck2014introduction}, i.e.,
$
-\phi_{1,t}(x_p) \geq -\phi_{1,t}(x_l)  - \phi_{1,t}'(x_l)(x_p - x_l)
$
which leads to
\begin{align}
&	e_l(x) - e_p(x) \notag \\
	& \geq	 \alpha_t [\phi_{1,t}'(x_l)(x - x_l) -   \phi_{1,t}'(x_p)(x - x_p) +   \phi_{1,t}(x_l)  -\phi_{1,t}(x_l)  - \phi_{1,t}'(x_l)(x_p - x_l) ]  \notag \\
	& = \alpha_t [\phi_{1,t}'(x_l)(x - x_l) -   \phi_{1,t}'(x_p)(x - x_p)  - \phi_{1,t}'(x_l)(x_p - x_l) ]  \notag \\
	& = \alpha_t [\phi_{1,t}'(x_l)-   \phi_{1,t}'(x_p)]  (x - x_p)  \geq 0 \label{ineq: e_l - e_p}
\end{align}
where the last inequality holds since $\phi_{1,t}$ is strictly concave and increasing, it implies that $\phi_{1,t}'>0$ and $\phi_{1,t}'$ is strictly decreasing; that is, for $l \leq p$, it follows that $x_l \leq x_p$ and hence $\phi_{1,t}'(x_l) \geq \phi_{1,t}'(x_p)$.  

On the other hand, for $l \geq p+1$, we have $x_l \geq x_{p+1}$. Hence, for $x \in [x_p, x_{p+1}]$, the difference 
\begin{align}
	e_l(x) - e_{p+1}(x) 
	&= 		a_l (x - x_l) + \alpha_t \cdot \phi_{1,t}(x_l) - \alpha_t \cdot \phi_{1,t}(x) \notag \\
	&\qquad - [ a_{p+1} (x - x_{p+1}) + \alpha_t \cdot \phi_{1,t}(x_{p+1}) - \alpha_t \cdot \phi_{1,t}(x)  ] \notag \\
	& = \alpha_t [\phi_{1,t}'(x_l)(x - x_l) -   \phi_{1,t}'(x_{p+1})(x - x_{p+1}) +   \phi_{1,t}(x_l)  -   \phi_{1,t}(x_{p+1}) ].  \notag 
\end{align}
Note that $\phi_{1,t}$ is strictly concave, $-\phi_{1,t}$ is strictly convex. Hence, it has a first-order lower bound
$
-\phi_{1,t}(x_{p+1}) \geq -\phi_{1,t}(x_l)  - \phi_{1,t}'(x_l)(x_{p+1} - x_l)
$
which leads to
\begin{align}
	e_l(x) - e_p(x) 
	& \geq	 \alpha_t [\phi_{1,t}'(x_l)(x - x_l) -   \phi_{1,t}'(x_{p+1})(x - x_{p+1})  - \phi_{1,t}'(x_l)(x_{p+1} - x_l) ]  \notag \\
	& =\alpha_t \underbrace{[\phi_{1,t}'(x_l)-   \phi_{1,t}'(x_{p+1})   ]}_{\leq 0} \underbrace{(x - x_{p+1})}_{\leq 0} \notag \\
	& \geq 0 \label{ineq: e_l - e_{p+1}}
\end{align}
where the last inequality holds since $\phi_{1,t}$ is strictly concave and increasing, it implies that $\phi_{1,t}'>0$ and $\phi_{1,t}'$ is strictly decreasing; that is, for $l \geq p+1$, it follows that $x_l \geq x_{p+1}$ and hence $\phi_{1,t}'(x_l) \leq \phi_{1,t}'(x_{p+1})$.  

With Inequality~\eqref{ineq: e_p leq e_{p+1}} and ~\eqref{ineq: e_l - e_{p+1}}, we have that for $x\in [x_p,x^{\prime}] \subseteq [x_p, x_{p+1}]$ and $l \geq p+1$,
$
e_l(x) \geq e_{p+1}(x) \geq e_p(x),
$
and in combination with Inequalities~\eqref{ineq: e_l - e_p} that $e_l(x) \geq e_p(x)$ for $l \leq p$ and $x \in [x_p, x^{\prime}] \subseteq [x_p, x_{p+1}]$. Therefore, we obtain 
$
e_l(x) \geq e_p(x)
$
for all $l$ and $x \in [x_p, x^{\prime}]$.

In addition, we now show that
$
e_l(x) \geq e_{p+1}(x)
$ for all $l$ and $x \in (x^{\prime}, x_{p+1}]$.
With Inequalities~\eqref{ineq: e_p > e_{p+1}} and~\eqref{ineq: e_l - e_p}, it follows that for  $x \in (x^{\prime}, x_{p+1}] \subseteq [x_p, x_{p+1}]$ and $l \leq p$,
we have $e_l(x) \geq e_p(x) > e_{p+1}(x)$, and by Inequalities~\eqref{ineq: e_l - e_{p+1}}, we have $e_l(x) \geq e_{p+1}(x)$ for all $x \in (x^{\prime}, x_{p+1}]  \subseteq [x_p, x_{p+1}]$ and $l \geq p+1$. Therefore, we obtain
$
e_l(x) \geq e_{p+1}(x)
$
for all $l$ and $x \in (x^{\prime}, x_{p+1}]$.
Hence,
\begin{align} \label{eq: mini error formula for x}
	\min_{l = 0,\dots,M_x} e_l(x) =
	\begin{cases}
		e_p(x), & \text{if } x \in [x_p, x^{\prime}] \\
		e_{p+1}(x), & \text{if } x \in (x^{\prime}, x_{p+1}].
	\end{cases}
\end{align}
Hence, using Equality~\eqref{eq: mini error formula for x}, we obtain 
\begin{align} \label{eq: sup error for x}
	\sup_{x \in [x_p, x_{p+1}]} \min_{l = 0,1,\dots,M_x} e_l(x) =
	\begin{cases}
		\sup_{x \in [x_p, x_{p+1}]} e_p(x), & \text{if } x \in [x_p, x^{\prime}]\\
		\sup_{x \in [x_p, x_{p+1}]} e_{p+1}(x), & \text{if } x \in (x^{\prime}, x_{p+1}].
	\end{cases}
\end{align}

Moreover, with the aids of monotonicity,  $e_p(x^{\prime}) \geq e_p(x)$ for all $x\in[x_p, x^{\prime}]$ and $e_{p+1}(x^{\prime}) \geq e_{p+1}(x)$ for all $x \in [x^{\prime}, x_{p+1}]$, Equality~\eqref{eq: sup error for x} becomes
\begin{align*} 
	\sup_{x \in [x_p, x_{p+1}]} \min_{l = 0,1,\dots,M_x} e_l(x) =
	\begin{cases}
		e_p(x^{\prime}), & \text{if } x \in [x_p, x^{\prime}]\\
		e_{p+1}(x^{\prime}), & \text{if } x \in [x^{\prime}, x_{p+1}].
	\end{cases}
\end{align*}
Then by the fact that $e_p(x^{\prime})=e_{p+1}(x^{\prime})$, we obtain 
$$
\sup_{x \in [x_p, x_{p+1}]} \min_{l = 0,1,\dots,M_x} e_l(x) = e_p(x^{\prime})=e_{p+1}(x^{\prime}).
$$
Solving the equation $e_p(x^{\prime}) = e_{p+1}(x^{\prime})$ yields
$$
x^{\prime}  = x^{\prime }(x_{p+1})   = \frac{	 \phi_{1,t}'(x_p)   x_p - \phi_{1,t}'(x_{p+1}) x_{p+1}  +   \phi_{1,t} (x_{p+1})  -  \phi_{1,t} (x_p)  }{\phi_{1,t}'(x_p) -  \phi_{1,t}'(x_{p+1}) }.
$$

An almost identical proof would work for  analyzing the approximation error for $c$, hence we omitted.  \qedhere

\end{proof}

\begin{proof}[Proof of Theorem~\ref{theorem: Successive Partition Points for general Additively Separable Utility}]
Fix $\varepsilon > 0$, $x_p$ and $c_q$ for $p = 0,1,\dots, M_x-1$ and $q = 0, 1, \dots, M_C -1$. We take the additively separable utility $f(x, c) = U_t((1+x)(1-c))$. By Lemma~\ref{lemma: Maximum Approximation Errors}, we choose $\varepsilon_x > 0$ and $\varepsilon_c > 0$ such that $e_p( x^{\prime} ) \leq \varepsilon_x$ and $e_q( c^{\prime}) \leq \varepsilon_c$, where $\varepsilon = \varepsilon_x + \varepsilon_c$. 
Subsequently, with the given partition points $x_p$ and $c_q$ that build the hyperplane $h_{p, q}(x, c) := \{(x,c) : a_p x + b_qc + \gamma_{p, q} = 0\}$, we now construct the next two hyperplanes: $h_{p+1, q}$ and $h_{p, q+1}$.
Specifically, fix $c = c_q$. Note that~$e_q(c_q) =0$, we observe that 
\begin{align*}
	\sup_{ \substack{ x \in [x_p, x_{p+1}]\\ c \in [c_q, c_{q+1}]} } e(x, c) 
	= \sup_{x \in [x_p, x_{p+1}]} \min_{l} e_l(x) + \sup_{c \in [c_q, c_{q+1}]} \min_{r} e_r(c) 
	&  = e_p(x^{\prime}) + e_q(c_q)\\ 
	&  = e_p(x^{\prime})   + 0. 
\end{align*}
Moreover, set
$
e_p( x^{\prime} ) = \alpha_t \phi_{1, t}'(x_p) ( x' - x_p ) + \alpha_t \phi_{1, t}(x_p) - \alpha_t \phi_{1, t}(x'):= \varepsilon_x.
$
This implies that
\begin{align*}
	\frac{\varepsilon_x }{ \alpha_t } 
	&= \phi_{1, t}'(x_p)  \left[  \frac{	 \phi_{1,t}'(x_{p+1})(x_p- x_{p+1} ) +   \phi_{1,t} (x_{p+1})  -  \phi_{1,t} (x_p)   }{\phi_{1,t}'(x_p) -  \phi_{1,t}'(x_{p+1}) }  \right] + \phi_{1, t}(x_p) - \phi_{1, t}(x').
\end{align*}
Take 
\begin{align}\label{eq: mathcal_A definition}
	\mathcal{A}(x_{p+1}, x_p) := \frac{	 \phi_{1,t}'(x_{p+1})(x_p- x_{p+1} ) +   \phi_{1,t} (x_{p+1})  -  \phi_{1,t} (x_p)   }{\phi_{1,t}'(x_p) -  \phi_{1,t}'(x_{p+1}) } = x^{\prime} - x_p
\end{align}
and note that $\mathcal{A} > 0$ since $x^{\prime} > x_p.$
Then, 
$
\phi_{1, t}'(x_p) \mathcal{A}(x_{p+1}, x_p)+ \phi_{1, t}(x_p) - \phi_{1, t}( \mathcal{A}(x_{p+1}, x_p) + x_p) - \frac{\varepsilon_x }{ \alpha_t }.
$
Hence, solving the nonlinear equation $\mathcal{G}(\mathcal{A}) = 0$ with
$
\mathcal{G} (\mathcal{A})=  \phi_{1, t}'(x_p) \mathcal{A} - \phi_{1, t}(  \mathcal{A} + x_p ) + \phi_{1, t}(x_p)  -\frac{\varepsilon_x }{ \alpha_t }
$ 
yields the corresponding solution, denoted by $\mathcal{A}^*$.  The existence and uniqueness of the solution~$\mathcal{A} > 0$ can be established as follows:
Indeed, noting that $\mathcal{A} \to 0$,  $\mathcal{A} \to -\frac{\varepsilon_t}{\alpha_t} < 0$. Moreover, note that $\phi_{1, t}$ is strictly concave, $\phi_{1, t}^\prime$ is decreasing; hence, $\mathcal{G}^{\prime}(\mathcal{A}) = \phi_{1, t}'(x_p) - \phi_{1, t}^{\prime}(  \mathcal{A} + x_p ) >0$, which shows that $\mathcal{G}$ is strictly increasing, Therefore, applying the Intermediate Value Theorem, there exists a solution $\mathcal{A}^{\prime}$ such that $\mathcal{G}(\mathcal{A}^\prime) = 0$. Moreover, the strictly increasingness of $\mathcal{G}$ assures the uniqueness of the solution.
Then, with the aid of~\eqref{eq: mathcal_A definition}, we~have
\[
\mathcal{A}^*  = \mathcal{A}^*(x_{p+1}, x_p) = \frac{	 \phi_{1,t}'(x_{p+1})(x_p- x_{p+1} ) +   \phi_{1,t} (x_{p+1})  -  \phi_{1,t} (x_p)   }{\phi_{1,t}'(x_p) -  \phi_{1,t}'(x_{p+1}) }  ,
\]
which implies that
\begin{align}
	x_{p+1}    = x_p + 	\mathcal{A}^* +  \frac{    \phi_{1,t} (x_{p+1})  -  \phi_{1,t} (x_p)  -	\mathcal{A}^* \phi_{1,t}'(x_p)  }{ \phi_{1,t}'(x_{p+1}) }.  \label{eq: x_p_1 nonlinear}
\end{align}

Take $\mathcal{B} := \mathcal{B}(x_{p+1}, x_{p}) =  \frac{    \phi_{1,t} (x_{p+1})  -  \phi_{1,t} (x_p)  -	\mathcal{A}^* \phi_{1,t}'(x_p)  }{ \phi_{1,t}'(x_{p+1}) }$.
Then we obtain $x_{p+1} = x_p + \mathcal{A}^* + \mathcal{B}$. Substituting this  back into~\eqref{eq: x_p_1 nonlinear} yields another nonlinear equation $ \mathcal{H}(\mathcal{B})  = 0$ with
$$
\mathcal{H}(\mathcal{B}) := \mathcal{B} \cdot \phi_{1,t}'(x_p + \mathcal{A}^* + \mathcal{B}) -   \phi_{1,t} (x_p + \mathcal{A}^* + \mathcal{B}) +  \phi_{1,t} (x_p) +	\mathcal{A}^* \phi_{1,t}'(x_p). 
$$
A similar argument using the Intermediate Value Theorem and strict monotonicity assures that there exists a unique solution, call it $\mathcal{B}^*$ such that $\mathcal{H}(\mathcal{B}) = 0$. Therefore, we obtain the final recursive equation:
$
x_{p+1} = x_p + \mathcal{A}^* + \mathcal{B}^*.
$

An almost identical proof would work for showing the similar recursive expression is valid for $c$. Hence, we omitted.
\end{proof}

\begin{proof}[Proof of Theorem~\ref{theorem: Successive Partition}]
Fix $\varepsilon > 0$, $x_p$ and $c_q$ for $p = 0,1,\dots, M_x-1$ and $q = 0, 1, \dots, M_C -1$. We take the log-additive separable utility $f(x, c) = U_t((1+x)(1-c)) = \log(1+x) + \log(1-c)$. By Lemma~\ref{lemma: Maximum Approximation Errors}, we choose $\varepsilon_x > 0$ and $\varepsilon_c > 0$ such that $e_p( x^{\prime} )\leq \varepsilon_x$ and $e_q( c^{\prime}) \leq \varepsilon_c$, where $\varepsilon = \varepsilon_x + \varepsilon_c$. 
Subsequently, with the given partition points $x_p$ and $c_q$ that build the hyperplane $h_{p, q}(x, c) := \{(x,c) : a_p x + b_qc + \gamma_{p, q} = 0\}$, we now construct the next two hyperplanes: $h_{p+1, q}$ and $h_{p, q+1}$.
Specifically, fix $x=x_p$. Note that $e_p(x_p) =0$, a lengthy but straightforward calculation leads to
\begin{align*}
	& \sup_{x}  \min_l e_l(x) + \sup_{c} \min_r e_r(c) \\
	& \qquad = e_p(x_p) + e_q(c^{\prime})\\ 
	& \qquad = 0 + e_q(c^{\prime}) \\
	& \qquad = b_qc^{\prime}+\log(1-c_q)-b_qc_q-\log(1-c^{\prime})\\
	& \qquad = \frac{1-c_{q+1}}{c_{q+1} - c_q} \log \left( \frac{1-c_q}{ 1-c_{q+1} } \right) - \log \left( \frac{1 - c_{q+1}}{c_{q+1} - c_q} \log \left( \frac{1 - c_q}{1 - c_{q+1}} \right) \right) -1.
\end{align*}
Now consider an auxiliary function $f_c(\theta) := \theta - \log\theta-1-\varepsilon_c$, then 
$
\theta_c = \frac{1 - c_{q + 1 }}{c_{q+1}-c_q}\log \left( \frac{1 - c_q}{1 - c_{q+1}} \right),
$
which solves $f_c(\theta)=0$. Since $e_{q+1}(c_{q+1})$ is strictly increasing in $[c_q, c_{\max}]$, $\theta_c$ is uniquely defined. Moreover, note that 
\begin{align*}
	\theta_c 
	= \frac{1 - c_{q+1}}{c_{q+1} - c_q} \log \left( \frac{1 - c_q}{1 - c_{q+1}} \right)
	&= \frac{\frac{ 1 - c_{q+1} }{1 - c_q}}{\frac{c_{q+1} - c_q}{1 - c_q}}\log \left(\frac{\frac{1-c_q}{1 - c_q}}{\frac{1 - c_{q+1}}{1 - c_q}} \right)\\
	&= \frac{1- \mathsf{d} }{\mathsf{d}}\log \left( \frac{1}{1- \mathsf{d}} \right)
\end{align*}
where 
$
\mathsf{d}  = \frac{ c_{q+1} - c_q}{ 1 - c_q} := \mathsf{d}_c.
$
Then, it follows that $c_{q+1} = (1 - \mathsf{d}_c) c_q + \mathsf{d}_c$; hence, the successive hyperplane $h_{p, q+1}$ is found.

Similarly, we now prove the successive recursion for $x$. Fix $c = c_q$. Note that $e_q(c_q) =0$, a lengthy but straightforward calculation leads to
\begin{align*}
	&\sup_{x} \min_l e_l(x) + \sup_{c} \min_r e_r(c) \\
	& \qquad = e_p(x^{\prime}) + e_q(c_q)\\ 
	& \qquad = e_p(x^{\prime}) + 0 \\
	& \qquad = a_p x^{\prime}+\log(1 + x_p) - a_px_p-\log(1 + x^{\prime}) \\
	& \qquad = \frac{1+x_{p+1}}{x_{p+1}-x_p}\log \left( \frac{1+x_{p+1}}{1+x_p} \right)-\log \left( \frac{1+x_{p+1}}{x_{p+1}-x_p}\log(\frac{1+x_{p+1}}{1+x_p}) \right) - 1.
\end{align*}
Then consider another auxiliary function $f_x( \mathsf{b} ):= \mathsf{b} - \log\mathsf{b} - 1 - \varepsilon_x$. Then
$
\mathsf{b}_x:=\frac{1 + x_{p+1}}{ x_{p+1} - x_p} \log \left( \frac{1+x_{p+1}}{1+x_p} \right),
$
which solves $f_x(\mathsf{b}) = 0$. Since $e_{p+1}(x_{p+1})$ is strictly increasing in $[x_p, x_{\max}]$, $\beta_x$ is uniquely defined. Moreover, note~that 
\begin{align*}
	\mathsf{b}_x 
	= \frac{1 + x_{p+1}}{x_{p+1} - x_p} \log \left( \frac{1 + x_{p+1}}{1 + x_p} \right) 
	&= \frac{\frac{1 + x_{p+1}}{1 + x_p}}{\frac{x_{p+1} - x_p}{1+x_p}} \log \left( \frac{1 + x_{p+1}}{1+x_p} \right)\\
	&= \frac{1+ \mathsf{a}}{\mathsf{a}}\log( 1 + \mathsf{a})
\end{align*}
where
$
\mathsf{a} = \frac{x_{p+1} - x_p}{1+x_p} := \mathsf{a}_x > 0.
$
Hence, it follows that $x_{p+1} = (1 + \mathsf{a}_x) x_p + \mathsf{a}_x$, and the successive hyperplane $h_{p+1, q}$ is built.
\end{proof}

\begin{proof}[Proof of Lemma~\ref{lemma: Optimal Number Of Hyperplanes}]
According to Lemma~\ref{theorem: Separable Maximum Approximation Error}, partition points $\{ x_l \}_{l \geq 0}$ and $\{ c_r \}_{r \geq 0}$ of hyperplanes are determined separately by given the associated error tolerances $\varepsilon_x$ and $\varepsilon_c$. Hence, the optimal number of hyperplanes required to meet the approximation error $\varepsilon  = \varepsilon_x + \varepsilon_c$ is $M:= M_x + M_c$.
\end{proof}

\section{Some Technical Results} \label{appendix: some technical results}

This appendix collects technical results related to the running expected objective function:
$
J_p(t; K(t), K(t-1)) := \mathbb{E}_p \left[ U_t \left(  \frac{V(t) }{V(t-1)} \right) \right].
$

\renewcommand{\thelemma}{\Alph{section}.\arabic{lemma}}

\begin{lemma}\rm \label{lemma: data-driven expression of the running ELG}
The running expected logarithmic growth of the portfolio satisfies
\begin{align*}
	J_p(t; K(t), K(t-1)) 
	&= \sum_{j=1}^m p_j \left[ U_t \left(   (1 + K(t)^\top  x^j)(1 - |K(t) - K(t-1)|^\top C(t)) \right) \right].
\end{align*}
\end{lemma}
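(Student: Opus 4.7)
The plan is to prove this result by a direct computation that combines the stochastic recursion for the cost-adjusted account value with the discrete-support structure of the return distribution. The statement is essentially a ``law of the unconscious statistician'' applied to a finitely-supported random vector, so I expect no real analytic obstacle; the work is purely bookkeeping.

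First I would substitute the account value dynamics
$$V(t) = (1 + K(t)^\top X(t))(1 - |K(t) - K(t-1)|^\top C(t))\, V(t-1)$$
into the ratio $V(t)/V(t-1)$, so that $V(t-1)$ cancels and the integrand becomes a deterministic function of $X(t)$ and the (non-random, given information up to $t-1$) quantities $K(t)$, $K(t-1)$, $C(t)$. Consequently
$$J_p(t; K(t), K(t-1)) = \mathbb{E}_p\bigl[ U_t((1 + K(t)^\top X(t))(1 - |K(t) - K(t-1)|^\top C(t)))\bigr].$$

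Next, I would invoke the modeling assumption that $X(t)$ takes values $x^1, \dots, x^m$ with probabilities $p_1, \dots, p_m$ under $p \in S_m$, i.e.\ $\mathbb{P}(X(t) = x^j) = p_j$. Since the mapping inside $\mathbb{E}_p$ is a Borel function of $X(t)$, by the definition of expectation with respect to a discrete measure,
$$\mathbb{E}_p\bigl[ U_t((1 + K(t)^\top X(t))(1 - |K(t) - K(t-1)|^\top C(t)))\bigr] = \sum_{j=1}^m p_j\, U_t\bigl((1 + K(t)^\top x^j)(1 - |K(t) - K(t-1)|^\top C(t))\bigr),$$
which is precisely the asserted formula.

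The main (and only) subtlety is ensuring that the transaction-cost factor $1 - |K(t)-K(t-1)|^\top C(t)$ is treated as deterministic given the information at period $t$; this is legitimate because $K(t)$, $K(t-1)$, and $C(t)$ are specified at (or before) the rebalancing instant and do not depend on the realized $X(t)$, so they can be pulled out of the per-realization evaluation inside the sum. No measurability, integrability, or convergence issue arises since the sum is finite. Thus the identity follows by direct substitution, and this is the only step that could conceivably require care.
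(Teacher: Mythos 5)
Your proposal is correct and follows essentially the same route as the paper: both substitute the account-value recursion into $V(t)/V(t-1)$ and then evaluate the expectation over the finite support $\{x^1,\dots,x^m\}$ with weights $p_j$. The only cosmetic difference is that the paper writes the discrete expectation as an integral against a density built from Dirac delta functions before sifting out the sum, whereas you pass directly to the definition of expectation for a finitely supported distribution, which is arguably the cleaner formulation of the same computation.
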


\begin{proof}
By Equality~\eqref{eq: running expected utility}, we obtain that
{\small	\begin{align*}
	J_p(t; K(t), K(t-1)) 
	&= \mathbb{E}_p \left[ U_t \left(  \frac{V(t) }{V(t-1)} \right) \right]\\
	& = \int_{\mathbb{R}^n} U_t \left(   (1 + K(t)^\top  x)(1 - |K(t) - K(t-1)|^\top C(t)) \right) f_X(x)dx,
\end{align*}
}	where $f_X(x) = \sum_{ j=1 }^m p_j \delta(x - x^j)$ is the probability distrubtion with Dirac Delta functions representing the random return $X$ taking values~$x^j$ with probabilities $p^j$. Hence, it follows that
	\begin{align*}
&	J_p(t; K(t), K(t-1)) \\
	& \qquad = \int_{\mathbb{R}^n} U_t \left(   (1 + K(t)^\top  X(t))(1 - |K(t) - K(t-1)|^\top C(t)) \right) \sum_{ j=1 }^m p_j \delta(x-x^j) dx \\
	&\qquad = \sum_{j=1}^m p_j \left[ U_t \left(   (1 + K(t)^\top  x^j)(1 - |K(t) - K(t-1)|^\top C(t)) \right) \right],
\end{align*}
which completes the proof.
\end{proof}

The next result indicates that the running expected objective is jointly concave, e.g., see \cite{bekjan2004joint}. 

\begin{lemma}[Joint Concavity of ELG] \label{lemma: joint concavity of ELG objective function}
Let $U_t$ be a additively separable utility satisfying Definition~\ref{definition: additively separable utility}.
Then the running objective $J_p(t; K(t), K(t-1))  =\mathbb{E}_p \left[ U_t \left(  \frac{V(t+1)}{V(t)} \right) \right]$ is jointly concave in $K(t)$ and $K(t-1)$.
\end{lemma}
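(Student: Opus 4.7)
The plan is to reduce the claim to two standard convexity-preserving operations by exploiting the additive separability of $U_t$. First I would invoke Lemma~\ref{lemma: data-driven expression of the running ELG} to rewrite
\[
J_p(t; K(t), K(t-1)) = \sum_{j=1}^m p_j \, U_t\!\left( (1 + K(t)^\top x^j)(1 - |K(t) - K(t-1)|^\top C(t)) \right),
\]
so that it suffices, since $p_j \geq 0$ and nonnegative linear combinations preserve concavity, to show that each summand is jointly concave in $(K(t), K(t-1))$.

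Next I would apply Definition~\ref{definition: additively separable utility} with the substitutions $x := K(t)^\top x^j$ and $c := |K(t) - K(t-1)|^\top C(t)$ to split each summand as
\[
\alpha_t \, \phi_{1,t}\!\left( K(t)^\top x^j \right) + \beta_t \, \phi_{2,t}\!\left( |K(t) - K(t-1)|^\top C(t) \right),
\]
and handle the two pieces separately. For the first piece, $K(t) \mapsto K(t)^\top x^j$ is affine and $\phi_{1,t}$ is concave, so the composition is concave in $K(t)$; since the expression does not depend on $K(t-1)$, it is trivially jointly concave in $(K(t), K(t-1))$, and multiplication by $\alpha_t > 0$ preserves this.

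For the second piece, the inner map $(K(t), K(t-1)) \mapsto |K(t) - K(t-1)|^\top C(t) = \sum_{i=1}^n c_i(t) |K_i(t) - K_i(t-1)|$ is a nonnegative combination of compositions of the convex function $|\cdot|$ with affine maps, hence convex and jointly so in $(K(t), K(t-1))$. Since $\phi_{2,t}$ is concave and \emph{strictly decreasing}, the standard composition rule (a concave nonincreasing function of a convex function is concave) yields joint concavity of $\phi_{2,t}(|K(t) - K(t-1)|^\top C(t))$; multiplying by $\beta_t > 0$ preserves it. Summing the two concave pieces and then taking the $p_j$-weighted sum over $j$ gives joint concavity of $J_p$.

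The only subtlety I anticipate is making sure the monotonicity direction of $\phi_{2,t}$ is used correctly in the composition rule—specifically, the argument would fail if $\phi_{2,t}$ were merely concave without being decreasing, since composing an arbitrary concave function with a convex function need not yield anything concave. Fortunately, Definition~\ref{definition: additively separable utility} explicitly requires $\phi_{2,t}$ to be strictly decreasing, so the rule applies cleanly and no additional assumption is needed.
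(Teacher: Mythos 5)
Your proposal is correct and follows essentially the same route as the paper: reduce via the data-driven expression to each summand, split by additive separability, and handle the cost term by showing $(K(t),K(t-1)) \mapsto |K(t)-K(t-1)|^\top C(t)$ is jointly convex and composing with the concave decreasing $\phi_{2,t}$. The only cosmetic difference is that the paper verifies joint convexity of the cost map directly via the triangle inequality whereas you view it as a nonnegative combination of absolute values of affine maps, and you make explicit the monotonicity hypothesis needed for the composition rule that the paper uses implicitly.
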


\begin{proof}  
With the aid of Lemma~\ref{lemma: data-driven expression of the running ELG},  we begin by noting that
$$
\mathbb{E}_p \left[ U_t \left(  \frac{V(t+1)}{V(t)} \right) \right] =  \sum_{j=1}^m p_j \left[ U_t \left( (1 + K(t)^\top  x^j) \left( 1 - |K(t) - K(t-1)|^\top C(t) \right) \right) \right].
$$
If the inner term $U_t \left( (1 + K(t)^\top  x^j) \left( 1 - |K(t) - K(t-1)|^\top C(t) \right) \right)$ is jointly concave for $K(t)$ and $ K(t-1)$, then, with $p_j \geq 0$ and $\sum_j p_j = 1$, the objective function
$$
\sum_{j=1}^m p_j \left[ U_t \left( ( 1 + K(t)^\top  x^j ) \left( 1 - | K(t) - K(t-1) |^\top C(t) \right) \right) \right]
$$
is also concave.
To establish this, we employ the fact that $U_t$ is additively separable; i.e., there exists functions $\phi_{1, t}, \phi_{2,t}$ and constants $\alpha_t>0, \beta_t>0$ such that $\phi_{1, t}$ is strictly concave and strictly increasing, and $\phi_{2,t}$ is strictly concave and strictly decreasing, and we have
\begin{align*}
&U_t \left( (1 + K(t)^\top  x^j) \left( 1 - | K(t) - K(t-1) |^\top C(t) \right) \right) \\
& \qquad = \alpha_t \phi_{1, t}( K(t)^\top  x^j) + \beta_t \phi_{2, t} \left( |K(t) - K(t-1)|^\top C(t) \right).
\end{align*}
The first term on the right-hand side is concave in $K(t)$ since $\alpha_t >0$, $K(t)^\top x^j$ is linear, and $\phi_{1,t}$ is concave. For the second term, it suffices to show that 
$
g(K(t), K(t-1)) := |K(t)-K(t-1)|^\top C(t)
$
is jointly convex. Then, by the convex composition rule, $\beta_t \phi_{2, t}( g(K(t), K(t-1)))$ is concave. 
To see this, let $\overline{K}^1(t), \overline{K}^2(t), \overline{K}^1(t-1), \overline{K}^2(t-1) \in \mathcal{K}$ be four different vectors and $\lambda \in [0,1]$. By the triangle inequality, we have
\begin{align*}
	&g (\lambda \overline{K}^1(t) + (1-\lambda)\overline{K}^1(t-1),\; \lambda \overline{K}^2(t) + (1-\lambda) \overline{K}^2(t-1))\\
	& \qquad = |\lambda\overline{K}^1(t) - \lambda \overline{K}^1(t-1) + (1-\lambda)\overline{K}^2(t) - (1-\lambda)\overline{K}^2(t-1)|^\top C(t)  \\
	&\qquad \leq \lambda| \overline{K}^1(t) - \overline{K}^1(t-1)|^\top C(t) + (1-\lambda)|\overline{K}^2(t) - \overline{K}^2(t-1)|^\top C(t)\\
	&\qquad = \lambda g (\overline{K}^1(t), \overline{K}^1(t-1)) + (1-\lambda) g(\overline{K}^1(t), \overline{K}^1(t-1)).
\end{align*}
Hence, $g(K(t), K(t-1)) :=|K(t)-K(t-1)|^\top C(t)$ is jointly convex. By the convex composition rule, see \cite[Section 3.2.4]{boyd2004convex}, it follows that $\beta_t \phi_{2, t}(|K(t)-K(t-1)|^\top C(t) )$ is concave. Therefore, $U_t$ is jointly concave in $K(t)$ and $K(t-1).$
\end{proof}

\begin{corollary}
Let $t = 1,2,\dots,T$, given $K(t-1)$, the running expected objective maximization problem
$
	\max_{K(t) \in \mathcal{K}} J_p(t; K(t), K(t-1) ) 
$
is a convex program with a concave objective function.
\end{corollary}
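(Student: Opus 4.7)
The plan is to deduce the corollary directly from two ingredients already established in the excerpt: the joint concavity of $J_p$ in $(K(t), K(t-1))$ proved in Lemma~A.2, and the convexity (and compactness) of the feasible set $\mathcal{K}$ recorded in the remark following Definition~\ref{definition: totality of the trading constraints}.

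First, I would fix the previous-period portfolio $K(t-1) \in \mathcal{K}$ and observe that the restriction of a jointly concave function to a slice obtained by fixing one of its arguments is itself concave in the remaining argument. More concretely, for any $\lambda \in [0,1]$ and any $K^1(t), K^2(t) \in \mathcal{K}$, applying the joint concavity inequality from Lemma~A.2 with $\overline{K}^1(t-1) = \overline{K}^2(t-1) = K(t-1)$ gives
\[
J_p(t;\, \lambda K^1(t) + (1-\lambda) K^2(t),\, K(t-1)) \geq \lambda J_p(t; K^1(t), K(t-1)) + (1-\lambda) J_p(t; K^2(t), K(t-1)),
\]
which is exactly concavity of the map $K(t) \mapsto J_p(t; K(t), K(t-1))$.

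Next, I would invoke the fact, already argued in the remark on convexity and compactness of $\mathcal{K}$, that $\mathcal{K}$ is the intersection of finitely many convex sets (short-selling, leverage, diversified holding and survival constraints, each defined by linear inequalities), and hence is a convex set. Combining the two observations, the problem $\max_{K(t) \in \mathcal{K}} J_p(t; K(t), K(t-1))$ consists of maximizing a concave function over a convex feasible set, which is by definition a convex program, completing the proof.

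There is essentially no real obstacle here; the work is all carried in Lemma~A.2 and the convexity remark, and the corollary is just the packaging of those facts as a convex optimization statement. The only minor care needed is to state the standard convention that maximizing a concave function over a convex set is what one means by a ``convex program'' in this context, so that the terminology in the conclusion matches the hypotheses supplied by the two cited results.
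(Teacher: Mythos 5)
Your proposal is correct and follows essentially the same route as the paper: invoke Lemma~\ref{lemma: joint concavity of ELG objective function} for joint concavity, restrict to the slice with $K(t-1)$ fixed, and combine with the convexity (and compactness) of $\mathcal{K}$ to conclude the problem is a convex program. The only difference is that you spell out explicitly why fixing one argument of a jointly concave function yields a concave function of the other, a step the paper leaves implicit.
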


\begin{proof}
By Lemma~\ref{lemma: joint concavity of ELG objective function}, $J_p(t; K(t), K(t-1) )$ is jointly concave in $K(t)$ and $K(t-1)$. Moreover, since $\mathcal{K}$ is a convex compact set, the problem stated above is a convex program with a concave objective function.
\end{proof}

\end{document}